\theoremstyle{plain}
\newtheorem{theorem}{Theorem}
\newtheorem{lemma}[theorem]{Lemma}
\newtheorem{corollary}[theorem]{Corollary}
\newtheorem{remark}[theorem]{Remark}
\renewcommand{\geq}{\geqslant}
\renewcommand{\leq}{\leqslant}
\renewcommand{\ge}{\geq}
\renewcommand{\le}{\leq}
\newcommand{\union}{\mathbin{\cup}}
\newcommand{\intersect}{\mathbin{\cap}}
\newcommand{\A}{\mathcal{A}}
\newcommand{\err}{\mathsf{err}}
\newcommand{\tr}{\mathrm{tr}}
\newcommand{\pad}{\diamond}
\newcommand{\N}{\mathbb{N}}
\newcommand{\ie}{i.e.,~}
\newcommand{\eg}{e.g.,~}
\newcommand{\ms}{\hspace*{0.5pt}}
\newcommand{\abs}[1] {\ensuremath\left|#1\right|}
\newcommand{\set}[2] {\ensuremath{\left\{#1 \mid #2\right\}}}
\newcommand{\os}[1] {\ensuremath{\left\{#1\right\}}}
\newcommand{\modd}[1]{\ (\mathrm{mod}\ #1)}
\begin{document}

\title{The State Complexity of Lexicographically Smallest Words and Computing Successors}

\author{Lukas Fleischer \and Jeffrey Shallit}

\date{School of Computer Science, University of Waterloo \\
200 University Avenue West, Waterloo, ON N2L 3G1, Canada \\
\texttt{\{lukas.fleischer,shallit\}@uwaterloo.ca}}

\maketitle

\begin{abstract}
  \noindent
  {\sffamily\normalsize\bfseries{Abstract.}} \
  Given a regular language $L$ over an ordered alphabet $\Sigma$, the set of lexicographically smallest (resp., largest) words of each length is itself regular.
  Moreover, there exists an unambiguous finite-state transducer that, on a given word $w \in \Sigma^*$, outputs the length-lexicographically smallest word larger than $w$ (henceforth called the \emph{$L$-successor} of $w$).
  In both cases, na\"ive constructions result in an exponential blowup in the number of states.
  We prove that if $L$ is recognized by a DFA with $n$ states, then $2^{\Theta(\sqrt{n \log n})}$ states are sufficient for a DFA to recognize the subset $S(L)$ of $L$ composed of its lexicographically smallest words. We give a matching lower bound that holds even if $S(L)$ is represented as an NFA.
  We then show that the same upper and lower bounds hold for an unambiguous finite-state transducer that computes $L$-successors.
\end{abstract}

\section{Introduction}

One of the most basic problems in formal language theory is the problem of enumerating the words of a language $L$.
Since, in general, $L$ is infinite, language enumeration is often formalized in one of the following two ways:
\begin{enumerate}
    \item A function that maps an integer $n \in \N$ to the $n$-th word of $L$.
    \item A function that takes a word and maps it to the next word in $L$.
\end{enumerate}
Both descriptions require some linear ordering of the words in order for them to be well-defined. Usually, \emph{radix order} (also known as length-lexicographical order) is used.
Throughout this work, we focus on the second formalization.

While enumeration is non-computable in general, there are many interesting special cases.
In this paper, we investigate the case of fixed regular languages, where
successors can be computed in linear time~\cite{AckermanM09,AckermanS09,Makinen97}.
Moreover, Frougny~\cite{Frougny97} showed that for every regular language~$L$, the mapping of words to their successors in~$L$ can be realized by a finite-state transducer.
Later, Angrand and Sakarovitch refined this result~\cite{AngrandS10}, showing that the successor function of any regular language is a finite union of functions computed by sequential transducers that operate from right to left.
However, to the best of our knowledge, no upper bound on the size of smallest transducer computing the successor function was known.

In this work, we consider transducers operating from left to right, and prove that the optimal upper bound for the size of transducers computing successors in $L$ is in $2^{\Theta(\sqrt{n \log n})}$, where $n$ is the size of the smallest DFA for~$L$.

The construction used to prove the upper bound relies heavily on another closely related result. Many years before Frougny published her proof, it had already been shown that if $L$ is a regular language, the set of all lexicographically smallest (resp., largest) words of each length is itself regular; see, \eg\cite{Sakarovitch83,Shallit94}.
This fact is used both in~\cite{AngrandS10} and in our construction.
In~\cite{Shallit94}, it was shown that if $L$ is recognized by a DFA with $n$ states, then the set of all lexicographically smallest words is recognized by a DFA with $2^{n^2}$ states.
While it is easy to improve this upper bound to $n \ms 2^n$, the exact state complexity of this operation remained open.
We prove that $2^{\Theta(\sqrt{n \log n})}$ states are sufficient and that this upper bound is optimal. We also prove that nondeterminism does not help with recognizing lexicographically smallest words, \ie the corresponding lower bound still holds if the constructed automaton is allowed to be nondeterministic.

The key component to our results is a careful investigation of the structure of lexicographically smallest words.
This is broken down into a series of technical lemmas in Section~\ref{sec:smallest}, which are interesting in their own right.
Some of the other techniques are similar to those already found in~\cite{AngrandS10}, but need to be carried out more carefully to achieve the desired upper bound.

For some related results, see \cite{Berthe:2007,Okhotin:2003}.

\section{Preliminaries}

We assume familiarity with basic concepts of formal language theory and automata theory; see~\cite{HopcroftMU06,Shallit08} for a comprehensive introduction. Below, we introduce concepts and notation specific to this work.

\paragraph{Ordered Words and Languages.}
Let $\Sigma$ be a finite ordered alphabet.
Throughout the paper, we consider words ordered by \emph{radix order}, which is defined by $u < v$ if either $\abs{u} < \abs{v}$ or there exist factorizations $u = xay$, $v = xbz$ with $\abs{y} = \abs{z}$ and $a, b \in \Sigma$ such that $a < b$.
We write $u \le v$ if $u = v$ or $u < v$. In this case, the word $u$ is \emph{smaller} than $v$ and the word $v$ is \emph{larger} than $u$.

For a language $L \subseteq \Sigma^*$ and two words $u, v \in \Sigma^*$, we say
that $v$ is the \emph{$L$-successor of $u$} if $v \in L$ and $w \not\in L$ for
all $w \in \Sigma^*$ with $u < w < v$.
Similarly, $u$ is the \emph{$L$-predecessor of $v$} if $u \in L$ and $w \not\in
L$ for all $w \in \Sigma^*$ with $u < w < v$.
A word is \emph{$L$-minimal} if it has no $L$-predecessor.
A word is \emph{$L$-maximal} if it has no $L$-successor.
Note that every nonempty language contains exactly one $L$-minimal word. It contains a (unique) $L$-maximal word if and only if $L$ is finite.
A word $u \in \Sigma^*$ is \emph{$L$-length-preserving} if it is not
$L$-maximal and the $L$-successor of $u$ has length $\abs{u}$. Words that are
not $L$-length-preserving are called \emph{$L$-length-increasing}.
Note that by definition, an $L$-maximal word is always $L$-length-increasing.
For convenience, we sometimes use the terms \emph{successor}
(resp.,~\emph{predecessor}) instead of \emph{$\Sigma^*$-successor}
(resp.,~\emph{$\Sigma^*$-predecessor}).

For a given language $L \subseteq \Sigma^*$, the set of all smallest words of each length in~$L$ is denoted by~$S(L)$. It is formally defined as follows:
\begin{equation*}
  S(L) = \set{u \in L}{\forall v \in L \colon v < u \implies \abs{v} < \abs{u}}.
\end{equation*}
Similarly, we define $B(L)$ to be the set of all $L$-length-increasing words:
\begin{equation*}
  B(L) = \set{u \in L}{\forall v \in L \colon v > u \implies \abs{v} > \abs{u}}.
\end{equation*}
A language $L \subseteq \Sigma^*$ is \emph{thin} if it contains at most one word of each length, \ie $\abs{L \intersect \Sigma^n} \in \os{0, 1}$ for all $n \ge 1$.
It is easy to see that for every language $L \subseteq \Sigma^*$, the languages $S(L)$ and $B(L)$ are thin.

\paragraph{Finite Automata and Transducers.}
A \emph{nondeterministic finite automaton} (NFA for short) is a $5$-tuple $(Q, \Sigma, {}\cdot{}, q_0, F)$ where $Q$ is a finite set of \emph{states}, $\Sigma$ is a finite alphabet, $q_0 \in Q$ is the \emph{initial state}, $F \subseteq Q$ is the set of \emph{accepting states} and ${}\cdot{} \colon Q \times \Sigma \to 2^Q$ is the \emph{transition function}. We usually use the notation $q \cdot a$ instead of ${}\cdot{}(q, a)$, and we extend the transition function to $2^Q \times \Sigma^*$ by letting $X \cdot \varepsilon = X$ and $X \cdot wa = \bigcup_{q \in X \cdot w}{q \cdot a}$ for all $X \subseteq Q$, $w \in \Sigma^*$, and $a \in \Sigma$.
For a state $q \in Q$ and a word $w \in \Sigma^*$, we also use the notation $q \cdot w$ instead of $\os{q} \cdot w$ for convenience.
A word $w \in \Sigma^*$ is \emph{accepted} by the NFA if $q_0 \cdot w \, \intersect\, F \ne \emptyset$.
We sometimes use the notation $p \xrightarrow{a} q$ to indicate that $q \in p \cdot a$.
An NFA is \emph{unambiguous} if for every input, there exists at most one accepting run. Unambiguous NFA are also called unambiguous finite state automata (UFA).
A \emph{deterministic finite automaton} (DFA for short) is an NFA $(Q, \Sigma, {}\cdot{}, q_0, F)$ with $\abs{q \cdot a} = 1$ for all $q \in Q$ and $a \in \Sigma$. Since this implies $\abs{q \cdot w} = 1$ for all $w \in \Sigma^*$, we sometimes identify the singleton $q \cdot w$ with the only element it contains.

A \emph{finite-state transducer} is a nondeterministic finite automaton that additionally produces some output that depends on the current state, the current letter and the successor state. For each transition, we allow both the input and the output letter to be empty.
Formally, it is a $6$-tuple $(Q, \Sigma, \Gamma, {}\cdot{}, q_0, F)$ where $Q$ is a finite set of states, $\Sigma$ and $\Gamma$ are finite alphabets, $q_0 \in Q$ is the initial state and $F \subseteq Q$ is the set of accepting states, and ${}\cdot{} \colon Q \times (\Sigma \union \os{\varepsilon}) \to 2^{Q \times (\Gamma \union \os{\varepsilon})}$ is the \emph{transition function}.
One can extend this transition function to the product $2^Q \times \Sigma^*$.
To this end, we first define the \emph{$\varepsilon$-closure} of a set $T \subseteq Q \times \Sigma^*$ as the smallest superset $C$ of $T$ with $\set{(q \cdot \varepsilon, w)}{(q, w) \in C} \subseteq C$.
We then define $X \cdot \varepsilon$ to be the $\varepsilon$-closure of $\set{(q, \varepsilon)}{q \in X}$ and $X \cdot wa$ to be the $\varepsilon$-closure of $\set{(q', ub)}{(q, u) \in X \cdot w, (q', b) \in q \cdot a}$ for all $X \subseteq Q$, $w \in \Sigma^*$ and $a \in \Sigma$.
We sometimes use the notation $p \xrightarrow{a \mid b} q$ to indicate that $(q, b) \in p \cdot a$.
A finite-state transducer is \emph{unambiguous} if, for every input, there exists at most one accepting run.

\section{The State Complexity of $S(L)$}
\label{sec:smallest}

It is known that if $L$ is a regular language, then both $S(L)$ and $B(L)$ are also regular~\cite{Sakarovitch83,Shallit94}.
In this section, we investigate the state complexity of the operations $L \mapsto S(L)$ and $L \mapsto B(L)$ for regular languages.
Since the operations are symmetric, we focus on the former.
To this end, we first prove some technical lemmas.
The first lemma is a simple observation that helps us investigate the structure of words in $S(L)$.

\begin{lemma}
  Let $x, u, y, v, z \in \Sigma^*$ with $\abs{u} = \abs{v}$. Then $xuuyz < xuyvz$ or $xyvvz < xuyvz$ or $xuuyz = xuyvz = xyvvz$.
  \label{lem:swap}
\end{lemma}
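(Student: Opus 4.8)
The statement compares three words of the same length, $xuuyz$, $xuyvz$, and $xyvvz$, where $|u|=|v|$. All three words have length $|x|+2|u|+|y|+|z|$ (using $|u|=|v|$), so radix order reduces to lexicographic order, and all three share the common prefix $x$. The plan is to strip off $x$ and compare $uuyz$, $uyvz$, $yvvz$ letter by letter, isolating where the first difference (if any) occurs.

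First I would handle the trivial case: if $u=v$, then $uuyz = uyvz = yvvz$ (since $u=v$ forces $uy = yu$... wait, that is false in general). Let me reconsider: if $u=v$, then $uuyz$ and $uyvz = uyuz$ need not be equal, and $yvvz = yuuz$ need not equal either. So the $u=v$ case is not automatically the equality case. Instead, the right split is on comparing the words $uy$ and $yu$ (both of length $|u|+|y|$). Note $uuyz = u\cdot(uy)\cdot z$, $uyvz = u\cdot(yv)\cdot z$, and $xyvvz$ after stripping $x$ is $(yv)\cdot v\cdot z = \ldots$; better: write the three words (post-$x$) as $u(uy)z$, $u(yv)z$... hmm, $uyvz$ is not of the form $u \cdot (\text{something}) \cdot$ aligned with the third. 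Let me align differently: $xuuyz$ vs $xuyvz$ differ only in the middle, both starting $xu$, so $xuuyz \lessgtr xuyvz$ according to $uy \lessgtr yv$ — wait these have different... $uy$ has length $|u|+|y|$, $yv$ has length $|y|+|v| = |y|+|u|$, equal, good. Similarly $xuyvz$ vs $xyvvz$: common prefix $x$, then compare $uyv$ with $yvv$, both of length $2|u|+|y|$.

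So the key comparisons are $P := uy$ vs $Q := yv$ (determining $xuuyz$ vs $xuyvz$) and $R := uyv$ vs $T := yvv$ (determining $xuyvz$ vs $xyvvz$). The heart of the argument: I claim it is impossible to have both $P > Q$ and $T > R$, i.e. both $uy > yv$ and $yvv > uyv$. Suppose $uy > yv$; let $i$ be the first position where they differ, so $(uy)[i] > (yv)[i]$ and they agree on positions $1,\dots,i-1$. I want to show this forces $uyv \ge yvv$ at position... Consider $uyv$ and $yvv$: they agree wherever $uy$ and $yv$ agree on the first $|u|+|y|$ letters, and the first $|u|+|y|$ letters of $uyv$ are exactly $uy$ while the first $|u|+|y|$ letters of $yvv$ are exactly $yv$. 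Hence $uyv$ and $yvv$ first differ at the same position $i \le |u|+|y|$, with $(uyv)[i] = (uy)[i] > (yv)[i] = (yvv)[i]$, so $uyv > yvv$, i.e. $R > T$, contradicting $T > R$. Symmetrically, if $yv > uy$ one shows $yvv > uyv$... let me check that direction gives what we want: we need to rule out $P > Q$ together with $T > R$; the argument just given shows $P > Q \Rightarrow R > T$, i.e. $R>T \Leftrightarrow \lnot(T>R \text{ or } T = R)$ — fine, $P>Q \Rightarrow R > T$, so $P > Q$ and $T > R$ cannot both hold. Therefore either $P \le Q$ (giving $xuuyz \le xuyvz$) or $T \le R$ (giving $xyvvz \le xuyvz$); and in the boundary where equalities line up one gets the all-equal case. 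I would phrase this cleanly by: if $xuyvz \le xuuyz$ is false then $uy < yv$, and then the alignment shows $yvv < uyv$, i.e. $xyvvz < xuyvz$; combine with the degenerate equality case to get the stated trichotomy.

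The main obstacle is purely bookkeeping: making sure the "first position of difference" in $uy$ vs $yv$ genuinely transfers to $uyv$ vs $yvv$ (and symmetrically to $uuyz$ vs $uyvz$), which works precisely because appending letters on the right cannot affect an earlier disagreement, and prepending the common block $u$ (resp. the block $y$) shifts all indices uniformly. I would state one small auxiliary observation — if two equal-length words $\alpha,\beta$ satisfy $\alpha < \beta$ then $\gamma\alpha\delta < \gamma\beta\delta$ for any $\gamma,\delta$ — and apply it twice. No step should require more than a line once that observation is in hand.
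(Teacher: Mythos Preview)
Your proposal is correct and lands on the same key comparison as the paper, namely $uy$ versus $yv$; however, you take an unnecessary detour for the second inequality. You compare $xuyvz$ with $xyvvz$ by stripping only the prefix $x$ and then reasoning about the first difference in $uyv$ versus $yvv$, whereas the paper simply strips the common prefix $x$ \emph{and} the common suffix $vz$, reducing the second comparison to the very same pair $uy$ versus $yv$. With that observation the whole proof becomes a three-case split on $uy \lessgtr yv$: if $uy<yv$ then $xu(uy)z<xu(yv)z$; if $uy>yv$ then $x(uy)vz>x(yv)vz$; and if $uy=yv$ both substitutions give equalities. Your first-difference transfer argument for $uyv$ versus $yvv$ is valid but superfluous once you notice $uyv=(uy)v$ and $yvv=(yv)v$.
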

\begin{proof}
  Note that $uy$ and $yv$ are words of the same length.
  If $uy < yv$, then $xuuyz <xuyvz$. Similarly, $uy > yv$ immediately yields $xuyvz > xyvvz$. The last case is $uy = yv$, which implies $xuuyz = xuyvz = xyvvz$.
\end{proof}

Using this observation, we can generalize a well-known factorization technique for regular languages to minimal words.
For a DFA with state set $Q$, a state $q \in Q$ and a word $w = a_1 \cdots a_n \in \Sigma^*$, we define
\begin{align*}
    \tr(q, w) = (q, q \cdot a_1, \dots, q \cdot a_1 \cdots a_n)
\end{align*}
to be the sequence of all states that are visited when starting in state $q$ and following the transitions labeled by the letters from $w$.

\begin{lemma}
  Let $\A$ be a DFA over $\Sigma$ with $n$ states and with initial state $q_0$.
  Then for every word $w \in \Sigma^*$, there exists a factorization $w = u_1 v_1^{i_1} \cdots u_k v_k^{i_k}$ with $u_1, v_1, \dots, u_k, v_k \in \Sigma^*$ and $i_1, \dots, i_k \ge 1$ such that, for all $j \in \os{1, \dots, k}$, the following hold:
  \begin{enumerate}[label={(\alph*)}]
    \item\label{enum:loops} $q_0 \cdot u_1 v_1^{i_1} \cdots u_{j-1} v_{j-1}^{i_{j-1}} u_j = q_0 \cdot u_1 v_1^{i_1} \cdots u_{j-1} v_{j-1}^{i_{j-1}} u_j v_j$,
    \item\label{enum:len} $\abs{u_j v_j} \le n$, and
    \item\label{enum:prefix} $v_j$ is not a prefix of $u_{j+1} v_{j+1}^{i_{j+1}} \cdots u_k v_k^{i_k}$.
  \end{enumerate}
  Additionally, if $w \in S(L(\A))$, this factorization can be chosen such that
  \begin{enumerate}[label={(\alph*)},resume]
    \item\label{enum:disjoint-lengths} the lengths $\abs{v_j}$ are pairwise disjoint (\ie $\abs{\os{\abs{v_1}, \dots, \abs{v_k}}} = k$) and
    \item\label{enum:small-exponents} there exists at most one $j \in \os{1, \dots, k}$ with $i_j > n$.
  \end{enumerate}
  \label{lem:fact}
\end{lemma}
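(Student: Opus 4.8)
The plan is to build the factorization greedily from left to right, using the standard pumping idea to get properties \ref{enum:loops}--\ref{enum:prefix}, and then to post-process it to obtain the two extra properties \ref{enum:disjoint-lengths} and \ref{enum:small-exponents} when $w \in S(L(\A))$, using Lemma~\ref{lem:swap} as the key combinatorial tool.

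First I would establish \ref{enum:loops}--\ref{enum:prefix}. Scan $w$ from the left; having already consumed a prefix $p = u_1 v_1^{i_1} \cdots u_{j-1} v_{j-1}^{i_{j-1}}$, look at the remaining suffix $s$. Reading $s$ from state $q := q_0 \cdot p$, among the first $n$ transitions two must land in the same state (pigeonhole), so $s$ has a prefix $u_j v_j$ with $\abs{u_j v_j} \le n$, $\abs{v_j} \ge 1$, and $q \cdot u_j = q \cdot u_j v_j$; choosing the repeated state pair as early as possible and $v_j$ as short as possible makes $v_j$ primitive-like enough that, after peeling off the maximal power $v_j^{i_j}$ that is a prefix of $s$ (so $i_j \ge 1$ is chosen maximal), property~\ref{enum:prefix} holds for what remains. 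Iterating until $s = \varepsilon$ terminates because each step removes at least one letter, yielding a factorization satisfying \ref{enum:loops}--\ref{enum:prefix}.

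Now assume $w \in S(L(\A))$ and post-process. For \ref{enum:disjoint-lengths}: suppose $\abs{v_a} = \abs{v_b}$ with $a < b$. Write $w = x v_a (\text{middle}) v_b z$ in a way that exposes one copy of $v_a$ at position $a$ and one copy of $v_b$ at position $b$; more precisely, using \ref{enum:loops} we can freely change the multiplicities $i_a$ and $i_b$ without leaving $L(\A)$, so we may replace the block $v_a^{i_a}\cdots v_b^{i_b}$ by $v_a^{i_a-1} v_a \cdots v_b v_b^{i_b-1}$ and apply Lemma~\ref{lem:swap} with $u = v_a$, $v = v_b$ to the factorization $x\, v_a\, y\, v_b\, z$ (where $y$ is the intervening material and $x,z$ absorb the remaining powers). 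The lemma says one of $x v_a v_a y z < x v_a y v_b z = w$ or $x y v_b v_b z < w$ holds (the equality case would give a shorter relation we can also exploit), and since both $x v_a v_a y z$ and $x y v_b v_b z$ have the same length as $w$ and lie in $L(\A)$ by \ref{enum:loops}, this contradicts $w \in S(L(\A))$. Hence a suitable rearrangement removes the length collision; repeating drives the number of distinct blocks down until all $\abs{v_j}$ are distinct, which in particular forces $k \le n$.

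For \ref{enum:small-exponents}: once the $\abs{v_j}$ are pairwise distinct, suppose two indices $a < b$ both have $i_a, i_b > n$. The idea is that having a long $v_a$-power and a long $v_b$-power simultaneously is wasteful for a smallest word: using \ref{enum:loops} we can trade iterations between the two blocks — decreasing $i_a$ and increasing $i_b$, or vice versa — while staying in $L(\A)$ but changing the length, or, keeping the length fixed, we can swap a $v_a$ past the intervening segment into a $v_b$ position. Lemma~\ref{lem:swap} again produces a strictly smaller word of the same length in $L(\A)$ unless everything collapses to equalities, again contradicting minimality; so at most one block can carry an exponent exceeding $n$. (One must check that these rewrites do not destroy \ref{enum:loops}--\ref{enum:disjoint-lengths}; since they only change exponents and permute equal-length blocks, the state-equality and length-disjointness conditions are preserved, and \ref{enum:prefix} can be re-established by a final left-to-right cleanup pass.)

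The main obstacle I expect is the bookkeeping in the post-processing steps: one has to argue that the rearrangements justified by Lemma~\ref{lem:swap} can be performed \emph{inside} the already-constructed factorization without breaking the earlier properties, and in particular that property~\ref{enum:prefix} survives (or can be cheaply restored) after shuffling equal-length blocks and retuning exponents. Making precise which word plays the role of $x$, $u$, $y$, $v$, $z$ in each application of Lemma~\ref{lem:swap}, and ensuring the resulting word is genuinely of the same length and in $L(\A)$ (this is exactly what \ref{enum:loops} buys us), is where the care goes; the termination of each post-processing loop is easy since a suitable monovariant (number of distinct block-lengths, or number of blocks with large exponent) strictly decreases.
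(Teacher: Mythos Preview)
Your greedy construction for \ref{enum:loops}--\ref{enum:prefix} is correct and matches the paper. But there are two genuine gaps in the second half.

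First, the ``post-processing'' framing is misleading: the paper shows that the greedy factorization \emph{already} satisfies \ref{enum:disjoint-lengths} and \ref{enum:small-exponents} when $w \in S(L(\A))$, by contradiction---no rearrangement is needed. More importantly, in your argument for \ref{enum:disjoint-lengths} you wave away the equality case of Lemma~\ref{lem:swap} (``the equality case would give a shorter relation we can also exploit''). That is exactly the crux. If both $w'$ and $w''$ equal $w$, you do \emph{not} get a smaller word; instead, $w' = w$ forces $v_a$ to be a prefix of $u_{a+1} v_{a+1}^{i_{a+1}} \cdots u_k v_k^{i_k}$, directly contradicting property~\ref{enum:prefix}. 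You never identify this, and without it the argument is incomplete.

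Second, your argument for \ref{enum:small-exponents} has a real hole. Once \ref{enum:disjoint-lengths} holds, $\abs{v_a} \ne \abs{v_b}$, so ``swapping a $v_a$ past the intervening segment into a $v_b$ position'' does not preserve length, and Lemma~\ref{lem:swap} does not apply to $u = v_a$, $v = v_b$. The missing idea is to trade $v_a^{\abs{v_b}}$ against $v_b^{\abs{v_a}}$: these \emph{do} have the same length $\abs{v_a}\abs{v_b}$, and the hypothesis $i_a, i_b > n \ge \abs{v_a}, \abs{v_b}$ is precisely what guarantees enough copies to remove $\abs{v_b}$ factors of $v_a$ (or $\abs{v_a}$ factors of $v_b$) and stay in $L(\A)$. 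With this substitution Lemma~\ref{lem:swap} applies, and the equality case is again dispatched via \ref{enum:prefix} as above.
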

\begin{proof}
  To construct the desired factorization, initialize $j := 1$ and $q := q_0$ and follow these steps:
  \begin{enumerate}
    \item\label{enum:step1} If $w = \varepsilon$, we are done.
      If $w \ne \varepsilon$ and the states in $\tr(q, w)$ are pairwise distinct, let $u_j = w$ and $v_j = \varepsilon$ and we are done.
      Otherwise, factorize $w = xy$ with $\abs{x}$ minimal such that $\tr(q_0, x)$ contains exactly one state twice, \ie $\abs{x}$ distinct states in total.
    \item Choose the unique factorization $x = u v$ such that $q \cdot u = q \cdot uv$ and $v \ne \varepsilon$.
    \item Let $q := q \cdot x$ and $w := y$.
    \item If $j > 1$ and $u = \varepsilon$ and $v = v_{j-1}$, increment $i_{j-1}$ and go back to step~\ref{enum:step1}. Otherwise, let $u_j := u$, $v_j := v$ and $j := j + 1$; then go back to step~\ref{enum:step1}.
  \end{enumerate}
  This factorization satisfies the first three properties by construction.
  It remains to show that if $w \in S(L(\A))$, then Properties \ref{enum:disjoint-lengths} and \ref{enum:small-exponents} are satisfied as well.
  
  Let us begin with Property~\ref{enum:disjoint-lengths}.
  For the sake of contradiction, assume that there exist two indices $a, b$ with $a < b$ and $\abs{v_a} = \abs{v_b}$. Note that by construction, $v_a$ and $v_b$ must be nonempty.
  Moreover, by Property~\ref{enum:loops}, the words
  \begin{align*}
      w' & := u_1 v_1^{i_1} \cdots u_a v_a^{i_a+1} \cdots u_b v_b^{i_b-1} \cdots u_k v_k^{i_k} \text{~and} \\
      w'' & := u_1 v_1^{i_1} \cdots u_a v_a^{i_a-1} \cdots u_b v_b^{i_b+1} \cdots u_k v_k^{i_k}
  \end{align*}
  both belong to $L(\A)$.
  However, since $w \in S(L(\A))$, neither $w'$ nor $w''$ can be strictly smaller than $w$. Using Lemma~\ref{lem:swap}, we obtain that $w' = w$. This contradicts Property~\ref{enum:prefix}.
  
  Property~\ref{enum:small-exponents} can be proved by using the same argument: Assume that there exist indices $a, b$ with $a < b$ and $i_a, i_b > n$.
  The words $v_a^{\abs{v_b}}$ and $v_b^{\abs{v_a}}$ have the same lengths. We define
  \begin{align*}
      w' & := u_1 v_1^{i_1} \cdots u_a v_a^{i_a + \abs{v_b}} \cdots u_b v_b^{i_b- \abs{v_a}} \cdots u_k v_k^{i_k}, \\ 
      w'' & := u_1 v_1^{i_1} \cdots u_a v_a^{i_a - \abs{v_b}} \cdots u_b v_b^{i_b+\abs{v_a}} \cdots u_k v_k^{i_k},
  \end{align*}
  and obtain $w = w'$, which is a contradiction as above.
\end{proof}

The existence of such a factorization almost immediately yields our next technical ingredient.

\begin{lemma}
  Let $\A$ be a DFA with $n \ge 3$ states. Let $q_0$ be the initial state of $\A$ and let $w \in S(L(\A))$.
  Then there exists a factorization $w = xy^i z$ with $i \in \N$, $\abs{xz} \le n^3$ and $\abs{y} \le n$ such that $q_0 \cdot xy = q_0 \cdot x$.
  In particular, $xy^*z \subseteq L(\A)$.
  \label{lem:xyz}
\end{lemma}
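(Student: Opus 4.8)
The plan is to run $w$ through the factorization of Lemma~\ref{lem:fact} and then absorb everything except one carefully chosen repeated block into the prefix and the suffix. Write $w = u_1 v_1^{i_1} \cdots u_k v_k^{i_k}$ as in Lemma~\ref{lem:fact}, so that Properties \ref{enum:loops}--\ref{enum:small-exponents} hold. If every $v_j$ is empty, the construction forces $k = 1$ and $w = u_1$ with the states in $\tr(q_0, w)$ pairwise distinct, so $\abs w \le n - 1$ and we may take $x = w$, $y = z = \varepsilon$, $i = 1$. Otherwise, choose the index $j^*$ to be the unique index $j$ with $i_j > n$ if Property~\ref{enum:small-exponents} provides one, and any index with $v_{j^*} \ne \varepsilon$ otherwise; in the first case $v_{j^*} \ne \varepsilon$ as well, since an empty block can occur only as the last block and (being produced by the terminating step) with exponent one. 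Now set
\begin{align*}
  x &= u_1 v_1^{i_1} \cdots u_{j^*-1} v_{j^*-1}^{i_{j^*-1}} u_{j^*}, & y &= v_{j^*}, \\
  z &= u_{j^*+1} v_{j^*+1}^{i_{j^*+1}} \cdots u_k v_k^{i_k}, & i &= i_{j^*},
\end{align*}
so that $w = xy^i z$, and $\abs y \le n$ by Property~\ref{enum:len}, while $q_0 \cdot x = q_0 \cdot xy$ by Property~\ref{enum:loops}.

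The crux is the bound $\abs{xz} \le n^3$. For every $j \ne j^*$ we have $i_j \le n$ (by the choice of $j^*$ and Property~\ref{enum:small-exponents}), so $\abs{u_j v_j^{i_j}} = \abs{u_j} + i_j\abs{v_j} \le \abs{u_j} + n\abs{v_j}$; combined with $\abs{u_j} + \abs{v_j} \le n$ from Property~\ref{enum:len}, this gives $\abs{u_j v_j^{i_j}} \le n + (n-1)\abs{v_j}$. Summing over $j \ne j^*$ and adding $\abs{u_{j^*}} \le n$ yields $\abs{xz} \le n + n(k-1) + (n-1)\sum_{j \ne j^*}\abs{v_j}$. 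By Property~\ref{enum:disjoint-lengths} the nonempty words among $v_1, \dots, v_k$ have pairwise distinct lengths in $\os{1, \dots, n}$, so $k \le n+1$ and $\sum_{j \ne j^*}\abs{v_j} \le 1 + 2 + \cdots + n = n(n+1)/2$. Substituting, $\abs{xz} \le n + n^2 + (n^3 - n)/2$, and this is at most $n^3$ exactly when $2n + 1 \le n^2$, i.e.\ when $n \ge 3$ --- which is precisely the hypothesis. Finally, $q_0 \cdot x = q_0 \cdot xy$ implies $q_0 \cdot xy^\ell = q_0 \cdot x$ for all $\ell \ge 0$ by induction, hence $q_0 \cdot xy^\ell z = (q_0 \cdot x)\cdot z = (q_0 \cdot xy^i)\cdot z = q_0 \cdot w \in F$, since $w \in S(L(\A)) \subseteq L(\A)$; thus $xy^* z \subseteq L(\A)$.

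The one real obstacle is making the length estimate land under $n^3$ rather than at some larger polynomial: the crude bound $i_j\abs{v_j} \le n^2$ per block would only give $\abs{xz} = \bigO(n^3)$ with a worse constant, so one genuinely needs Property~\ref{enum:disjoint-lengths} (distinctness of the loop lengths) together with the sharper inequality $\abs{u_j} \le n - \abs{v_j}$ to squeeze the constant down, and the assumption $n \ge 3$ is exactly what is needed to absorb the lower-order terms.
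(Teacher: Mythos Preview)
Your proof is correct and follows essentially the same route as the paper: invoke Lemma~\ref{lem:fact}, single out the (at most one) block with large exponent, and bound the remaining length by combining $\abs{u_j v_j} \le n$, $i_j \le n$, and the distinctness of the $\abs{v_j}$ to arrive at $n^2 + n + (n-1)n(n+1)/2 \le n^3$ for $n \ge 3$. The only cosmetic difference is that the paper treats the case ``all $i_j \le n$'' by setting $y = \varepsilon$ and absorbing the entire word into $x$, whereas you instead pick any nonempty $v_{j^*}$ as $y$; your unified treatment is slightly tidier but the arithmetic and the underlying idea are identical.
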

\begin{proof}
  Let $w = u_1 v_1^{i_1} \cdots u_k v_k^{i_k}$ be a factorization that satisfies all properties in the statement of Lemma~\ref{lem:fact}.
  Suppose first that all exponents $i_j$ are at most~$n$.
  Using Properties~\ref{enum:len} and \ref{enum:disjoint-lengths}, we obtain $k \le n+1$ and the maximum length of~$w$ is achieved when all lengths $\ell \in \os{0, \dots, n}$ are present among the factors $v_j$ and the corresponding $u_j$ have lengths $n - \abs{v_j}$.
  This yields
  \begin{equation*}
    \abs{w} \le
    \sum_{\ell = 0}^n \big(n - \ell + n\ell\big) = 
    n(n + 1) + (n - 1) \sum_{\ell = 1}^n \ell = 
    n^2 + n + \frac{(n-1)n(n+1)}{2} \le n^3
  \end{equation*}
  where the last inequality uses $n \ge 3$.
  Therefore, we may set $x := w$, $y := \varepsilon$ and $z := \varepsilon$.
  
  If not all exponents are at most $n$, by Property~\ref{enum:small-exponents}, there exists a unique index $j$ with $i_j > n$. In this case, let $x := u_1 v_1^{i_1} \cdots u_{j-1} v_{j-1}^{i_{j-1}} u_j$, $y := v_j$ and $z := u_{j+1} v_{j+1}^{i_{j+1}} \cdots u_k v_k^{i_k}$.
  The upper bound $\abs{xz} \le n^3$ still follows by the argument above, and $\abs{y} \le n$ is a direct consequence of Property~\ref{enum:len}.
  Moreover, $w \in L(\A)$ and Property~\ref{enum:loops} together imply that $xy^*z \subseteq L(\A)$.
\end{proof}

For the next lemma, we need one more definition.
Let $\A$ be a DFA with initial state $q_0$.
Two tuples $(x, y, z)$ and $(x', y', z')$ are \emph{cycle-disjoint with respect to $\A$} if the sets of states in $\tr(q_0 \cdot x, y)$ and $\tr(q_0 \cdot x', y')$ are either equal or disjoint.

\begin{lemma}
  Let $\A$ be a DFA with $n \ge 3$ states and initial state $q_0$. 
  Let $(x, y, z)$ and $(x', y', z')$ be tuples that are not cycle-disjoint with respect to $\A$ such that
  \begin{equation*}
      q_0 \cdot x = q_0 \cdot xy, \quad 
      q_0 \cdot x' = q_0 \cdot x'y', \quad
      \abs{xz}, \abs{x'z'} \le n^3 \text{~and} \quad
      \abs{y}, \abs{y'} \le n.
  \end{equation*}
  Then either $xy^*z \intersect S(L(\A))$ or $x'(y')^*z' \intersect S(L(\A))$ only contains words of length at most $n^3 + n^2$.
  \label{lem:cycle-disjoint}
\end{lemma}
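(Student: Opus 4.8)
The plan is to argue by contradiction. Suppose both sets contain a long word, say $u = xy^{a}z \in S(L(\A))$ with $\abs{u} > n^3+n^2$ and $u' = x'(y')^{b}z' \in S(L(\A))$ with $\abs{u'} > n^3+n^2$; I will derive a contradiction with the hypothesis that the two tuples are not cycle-disjoint. If $y=\varepsilon$ then $xy^*z=\os{xz}$ has length $\le n^3 < n^3+n^2$, so it contains no long word at all, and the same holds if $y'=\varepsilon$; hence I may assume $y,y'\neq\varepsilon$ and set $\ell:=\abs{y}\ge 1$, $\ell':=\abs{y'}\ge 1$. From $\abs{u} = \abs{xz} + a\ell \le n^3 + an$ and $\abs{u} > n^3+n^2$ I get $a>n$, \ie $a\ge n+1$, and symmetrically $b\ge n+1$.

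Next I would pass to a common cycle state. Let $C$ and $C'$ be the sets of states occurring in $\tr(q_0\cdot x, y)$ and $\tr(q_0\cdot x', y')$. By hypothesis $C\cap C'\neq\emptyset$ and $C\neq C'$; fix $p\in C\cap C'$. Factor $y = y_1y_2$ with $q_0\cdot xy_1 = p$ and set $\alpha := y_2y_1$. Since $q_0\cdot x = q_0\cdot xy$ we have $p\cdot y_2 = q_0\cdot x$, hence $p\cdot\alpha = p$; moreover $\tr(p,\alpha)$ is a cyclic rotation of $\tr(q_0\cdot x, y)$, so its set of states is again $C$, and $u = x y_1 \alpha^{a-1} y_2 z$. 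Doing the same on the primed side gives $\beta := y_2'y_1'$ with $p\cdot\beta = p$, the set of states of $\tr(p,\beta)$ equal to $C'$, and $u' = x' y_1' \beta^{b-1} y_2' z'$. Note $\abs{\alpha} = \ell$ and $\abs{\beta} = \ell'$, so $\alpha^{\ell'}$ and $\beta^{\ell}$ are words of the same length $\ell\ell'$.

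The heart of the argument is a pair of cycle-swapping moves. Since $a-1\ge n\ge\ell'$, the word $u_{\mathrm{new}} := x y_1 \alpha^{a-1-\ell'}\beta^{\ell} y_2 z$ is defined; its run goes $q_0\cdot x \xrightarrow{y_1} p$, loops $a-1-\ell'$ times around $\alpha$ and then $\ell$ times around $\beta$ at $p$, and then follows $y_2z$ exactly as the run of $u$, so $u_{\mathrm{new}}\in L(\A)$, and one checks $\abs{u_{\mathrm{new}}} = \abs{u}$. Minimality of $u$ forces $u\le u_{\mathrm{new}}$; as the two words share the prefix $xy_1\alpha^{a-1-\ell'}$ and thereafter read the equal-length suffixes $\alpha^{\ell'}y_2z$ and $\beta^{\ell}y_2z$, whose comparison reduces to that of $\alpha^{\ell'}$ and $\beta^{\ell}$, this is equivalent to $\alpha^{\ell'}\le_{\mathrm{lex}}\beta^{\ell}$. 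The mirror-image move on $u'$ (legitimate since $b-1\ge n\ge\ell$) yields $\beta^{\ell}\le_{\mathrm{lex}}\alpha^{\ell'}$, so in fact $\alpha^{\ell'} = \beta^{\ell}$ as words. But reading $\alpha^{\ell'}$ from $p$ only revisits the states of the single loop $\tr(p,\alpha)$ (because $p\cdot\alpha = p$), namely the states of $C$, while reading $\beta^{\ell}$ from $p$ revisits exactly the states of $C'$; since $\alpha^{\ell'}=\beta^{\ell}$ these are the same walk, so $C = C'$, contradicting that the tuples are not cycle-disjoint.

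I expect the main obstacle to be the bookkeeping in the last two paragraphs: choosing the factorizations through $p$ so that both swap words provably stay in $L(\A)$ with the correct length, and checking that the resulting radix comparisons genuinely collapse to the lexicographic comparison of $\alpha^{\ell'}$ and $\beta^{\ell}$. Once the equality $\alpha^{\ell'}=\beta^{\ell}$ is in hand the contradiction is immediate and needs no word combinatorics (no appeal to Fine--Wilf), because a power of a $p$-loop, read from $p$, never leaves that loop's state set.
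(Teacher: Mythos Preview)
Your proof is correct and follows essentially the same route as the paper's: both locate a common state $p$ on the two cycles, pass to the conjugate loops $\alpha=y_2y_1$ and $\beta=y_2'y_1'$ at $p$ (the paper writes these as $vu$ and $v'u'$), and compare the equal-length powers $\alpha^{\ell'}$ and $\beta^{\ell}$ via a cycle-swap that stays inside $L(\A)$. The only structural difference is that the paper argues directly---first ruling out $\alpha^{\ell'}=\beta^{\ell}$ (as that would force $C=C'$), then using the strict inequality to bound one of the two sets---whereas you run the contrapositive: assuming both sets contain a word longer than $n^3+n^2$ yields $\alpha^{\ell'}\le\beta^{\ell}$ and $\beta^{\ell}\le\alpha^{\ell'}$, hence equality, hence $C=C'$, a contradiction. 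The bookkeeping you flagged (the exponents $a-1\ge \ell'$ and $b-1\ge \ell$, the length equality, and the reduction of the radix comparison to the middle block) is all in order.
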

\begin{proof}
  Since the tuples are not cycle-disjoint with respect to $\A$, we can factorize $y = uv$ and $y' = u'v'$ such that $q_0 \cdot xu = q_0 \cdot x'u'$.

  Note that since $q_0 \cdot xuv = q_0 \cdot x$, the sets of states in $\tr(q_0 \cdot x, uv)$ and $\tr(q_0 \cdot xu, (vu)^i)$ coincide for all $i \ge 1$. By the same argument, the sets of states in $\tr(q_0 \cdot x', u'v')$ and $\tr(q_0 \cdot x'u', (v'u')^i)$ coincide for all $i \ge 1$.

  If the powers $(vu)^{\abs{y'}}$ and $(v'u')^{\abs{y}}$ were equal, then the sets of states in $\tr(q_0 \cdot xu, (vu)^{\abs{y'}})$ and $\tr(q_0 \cdot x'u', (v'u')^{\abs{y}})$ coincide. By the previous observation, this would imply that the tuples $(x, y, z)$ and $(x', y', z')$ are cycle-disjoint, a contradiction. We conclude $(vu)^{\abs{y'}} \ne (v'u')^{\abs{y}}$.
  
  By symmetry, we may assume that $(vu)^{\abs{y'}} < (v'u')^{\abs{y}}$.
  But then, for every word of the form $x' (y')^i z' \in L(\A)$ with $i > \abs{y}$, there exists a strictly smaller word
  $$x'u' (vu)^{\abs{y'}} (v'u')^{i-\abs{y}-1} v'z'$$
  in $L(\A)$.
  To see that this word indeed belongs to $L(\A)$, note that $q_0 \cdot x'u' vu = q_0 \cdot xuvu = q_0 \cdot xu = q_0 \cdot x'u'$.
  This means that all words in $x'(y')^*z' \intersect S(L)$ are of the form  $x' (y')^i z'$ with $i \le \abs{y}$.
\end{proof}

The previous lemmas now allow us to replace any language $L$ by another language that has a simple structure and approximates $L$ with respect to $S(L)$.

\begin{lemma}
  Let $\A$ be a DFA over $\Sigma$ with $n \ge 3$ states.
  Then there exist an integer $k \le n^4+n^3$ and tuples $(x_1, y_1, z_1), \dots, (x_k, y_k, z_k) \in (\Sigma^*)^3$ such that the following properties hold:
  \begin{enumerate}[label=(\roman*)]
      \item $S(L(\A)) \subseteq \bigcup_{i=1}^k x_i y_i^* z_i \subseteq L(\A)$,
      \item\label{enum:xzbound} $\abs{x_i z_i} \le n^3 + n^2$ for all $i \in \os{1, \dots k}$, and
      \item\label{enum:ybound} $\sum_{\ell \in Y} \ell \le n$ where $Y = \os{\abs{y_1}, \dots, \abs{y_k}}$.
  \end{enumerate}
  \label{lem:tl-simplify}
\end{lemma}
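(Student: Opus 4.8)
I would prove Lemma~\ref{lem:tl-simplify} by first extracting, for every word $w \in S(L(\A))$, a single tuple $(x,y,z)$ via Lemma~\ref{lem:xyz}, so that $w \in xy^*z \subseteq L(\A)$, $\abs{xz} \le n^3$, $\abs{y} \le n$ and $q_0 \cdot xy = q_0 \cdot x$. The collection of all such tuples, ranging over $w$, is a priori infinite, but each is determined by a bounded amount of data, so only finitely many genuinely distinct ones arise. The work is to (a) bound how many tuples we really need, and (b) trim their parameters so that the stated bounds $\abs{x_i z_i} \le n^3 + n^2$ and $\sum_{\ell \in Y} \ell \le n$ hold.

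First I would record that each tuple from Lemma~\ref{lem:xyz} may, without loss of generality, be taken with $q_0 \cdot x$ equal to the (unique) state on the cycle that $y$ traces, and with $y$ a \emph{simple} cycle word of length $\le n$ read from that state; there are at most $n$ choices of state and, for each, the cycle word is then determined, so at most $n$ essentially different ``cycle parts'' occur. Next, among all tuples sharing a given cycle part, I would invoke Lemma~\ref{lem:cycle-disjoint}: any two tuples that are not cycle-disjoint have the property that one of the two languages $x y^* z$ contributes only words of length $\le n^3 + n^2$ to $S(L(\A))$. Those short contributions can be absorbed by replacing the offending tuple with a tuple of the form $(w', \varepsilon, \varepsilon)$ for each short word $w'$ it contributes — there are at most $\abs{\Sigma}^{n^3+n^2}$ such, which is too many, so instead I would keep the tuple but shorten it: since only words $xy^iz$ with $\abs{xy^iz} \le n^3 + n^2$ matter and $\abs{y} \le n$, I can cap the exponent at $i \le (n^3+n^2)/1$, and then re-apply Lemma~\ref{lem:xyz}'s pumping to re-express the resulting finite set inside $x' (y')^* z'$ tuples with $\abs{x'z'} \le n^3 + n^2$ and $y' = \varepsilon$ or $\abs{y'} \le n$ on a \emph{fresh} cycle disjoint from the others. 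Iterating this cleanup, within each of the $\le n$ cycle classes the surviving tuples are pairwise cycle-disjoint, hence use pairwise disjoint sets of states, hence there are at most $n$ of them per class, giving $k \le n^2$ in the worst naive count; a slightly more careful bookkeeping (tuples with $y=\varepsilon$ are bounded by the number of residual short words, which one groups by their DFA state, $\le n$ per class) lands at $k \le n^4 + n^3$.

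For property~\ref{enum:ybound}: after the cleanup the nonempty $y_i$ come from pairwise cycle-disjoint tuples, so distinct $y_i$'s of equal length would force the same state set (a simple cycle of a given length through a given state is unique), contradicting cycle-disjointness; thus the lengths $\abs{y_i}$ that are nonzero are pairwise distinct, and since each such cycle occupies $\abs{y_i}$ distinct states out of $n$, and the cycles are state-disjoint, $\sum_{i : y_i \ne \varepsilon} \abs{y_i} \le n$. Adding the zero lengths does not change the sum over the \emph{set} $Y$, so \ref{enum:ybound} follows. Property~\ref{enum:xzbound} is immediate for the tuples produced by Lemma~\ref{lem:xyz} (where the bound is even $n^3$) and is exactly the bound guaranteed for the re-expressed short tuples, so it holds throughout.

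\textbf{Main obstacle.} The delicate point is making the cleanup \emph{terminate} with the claimed bound on $k$: naively, trimming an exponent and re-pumping a finite language can create new tuples on new cycles, and one must argue this process stabilizes — the right potential is the multiset of state sets of the cycles in use, which strictly decreases (in the cycle-disjoint refinement) each round, bounded by the subsets of $Q$ actually reachable as cycle-state-sets, of which there are $\le n$ (one per simple cycle length per base state, but disjointness collapses this). Getting from that structural bound to the concrete $k \le n^4 + n^3$ requires carefully counting the ``residual'' $y = \varepsilon$ tuples: these are short words of length $\le n^3 + n^2$, but we only need one tuple per pair (DFA-state-after-$x$, length-class), and a further application of Lemma~\ref{lem:xyz}-style pumping to the short set lets us cover it with $\le n \cdot (n^3+n^2)/n = n^4 + n^3$ tuples. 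That counting, rather than any conceptual difficulty, is where the proof will be longest; everything else is a direct assembly of Lemmas~\ref{lem:xyz} and~\ref{lem:cycle-disjoint}.
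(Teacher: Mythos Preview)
Your overall strategy—harvest the tuples from Lemma~\ref{lem:xyz}, then prune via Lemma~\ref{lem:cycle-disjoint} until the survivors are pairwise cycle-disjoint—is exactly the paper's. But several of your steps do not go through as written.

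First, the claim that ``for each [state], the cycle word is then determined'' is false: a given state of $\A$ can lie on many distinct simple cycles, so there is no bound of $n$ on the number of ``cycle parts'', and your grouping by cycle part collapses. The same error recurs in your argument for~\ref{enum:ybound}, where you write ``a simple cycle of a given length through a given state is unique''. You also treat \emph{cycle-disjoint} as synonymous with \emph{state-disjoint}, but the definition allows the two cycle state-sets to be \emph{equal}; in particular, tuples ``sharing a given cycle part'' are already cycle-disjoint, so invoking Lemma~\ref{lem:cycle-disjoint} among them is vacuous. The correct argument for~\ref{enum:ybound} is that among pairwise cycle-disjoint tuples the \emph{distinct} cycle state-sets are pairwise disjoint; since each nonempty $y_i$ traces a simple cycle, $\abs{y_i}$ equals the size of its state set, so $Y \setminus \os{0}$ is contained in the set of sizes of these disjoint sets and $\sum_{\ell \in Y} \ell \le n$ follows.

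Second—and this is the real gap—your elimination process is not well specified, and you flag termination and the final count as the main obstacle without resolving it. The proposal to ``re-pump'' short residual words onto ``a fresh cycle disjoint from the others'' has no reason to succeed (the DFA need not contain such a cycle), and the arithmetic $n \cdot (n^3+n^2)/n$ is not attached to any actual count. The paper closes this gap with an additional elimination rule you are missing: whenever two surviving tuples share the same \emph{length profile} $(\abs{x_iz_i}, \abs{y_i})$, the sets $x_iy_i^*z_i$ and $x_jy_j^*z_j$ contain words of exactly the same lengths, so by comparing them one can discard one tuple (after peeling off at most one $(x,\varepsilon,z)$ tuple to cover the exponent-zero term when $y_i \ne y_j$). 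Once this rule and the cycle-disjointness rule are both exhausted, the surviving tuples have pairwise distinct pairs $(\abs{x_iz_i}, \abs{y_i})$, and the bound $k \le n^4+n^3$ is then a direct count using~\ref{enum:xzbound} and~\ref{enum:ybound}, with no recursion or re-pumping needed.
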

\begin{proof}
  If we ignore the required upper bound $k \le n^4+n^3$ and Property~\ref{enum:ybound} for now, the statement follows immediately from Lemma~\ref{lem:xyz} and the fact that there are only finitely many different tuples $(x, y, z)$ with $\abs{xz} \le n^3$ and $\abs{y} \le n$.
  We start with such a finite set of tuples $(x_1, y_1, z_1), \dots, (x_k, y_k, z_k)$ and show that we can repeatedly eliminate tuples until at most $n^4+n^3$ cycle-disjoint tuples remain.
  The desired upper bound $\sum_{\ell \in Y} \ell \le n$ then follows automatically.
  
  In each step of this elimination process, we handle one of the following cases:
  
  \begin{itemize}
    \item If there are two distinct tuples $(x_i, y_i, z_i)$ and $(x_j, y_j, z_j)$ with $\abs{x_i z_i} = \abs{x_j z_j}$ and $y_i = y_j$, there are two possible scenarios.
    If $x_i z_i < x_j z_j$, then for every word in $x_j y_j^* z_j$ there exists a smaller word in $x_i y_i^* z_i$ and we can remove $(x_j, y_j, z_j)$ from the set of tuples. By the same argument, we can remove the tuple $(x_i, y_i, z_i)$ if $y_i = y_j$ and $x_i z_i > x_j z_j$.
    
    \item Now consider the case that there are two distinct tuples $(x_i, y_i, z_i)$ and $(x_j, y_j, z_j)$ with $\abs{x_i z_i} = \abs{x_j z_j}$ and $\abs{y_i} = \abs{y_j}$ but $y_i \ne y_j$.
    We first check whether $x_i z_i < x_j z_j$. If true, we add the tuple $(x_i, \varepsilon, z_i)$, otherwise we add $(x_j, \varepsilon, z_j)$.
    If $x_i y_i < x_j y_j$, we know that each word in $x_j y_j^+ z_j$ has a smaller word in $x_i y_i^+ z_j$, and we remove the tuple $(x_j, y_j, z_j)$. Otherwise, we can remove $(x_i, y_i, z_i)$ by the same argument.
    
    \item
    The last case is that there exist two tuples $(x_i, y_i, z_i)$ and $(x_j, y_j, z_j)$ that are not cycle-disjoint. By Lemma~\ref{lem:cycle-disjoint}, we can remove at least one of these tuples and replace it by multiple tuples of the form $(x, \varepsilon, z)$.
    Note that the newly introduced tuples might be of the form $(x, \varepsilon, z)$ with $\abs{xz} > n^3$ but Lemma~\ref{lem:cycle-disjoint} asserts that they still satisfy $\abs{xz} \le n^3 + n^2$.
  \end{itemize}

  Note that we introduce new tuples of the form $(x, \varepsilon, z)$ during this elimination process. These new tuples are readily eliminated using the first rule.
  
  After iterating this elimination process, the remaining tuples are pairwise cycle-disjoint and the pairs $(\abs{x_iz_i}, \abs{y_i})$ assigned to these tuples $(x_i, y_i, z_i)$ are pairwise disjoint.
  Properties~\ref{enum:xzbound} and~\ref{enum:ybound} yield the desired upper bound on $k$.
\end{proof}

\begin{remark}
  While $S(L)$ can be approximated by a language of the simple form given in Lemma~\ref{lem:tl-simplify}, the language $S(L)$ itself does not necessarily have such a simple description.
  An example of a regular language $L$ where $S(L)$ does not have such a simple form is given in the proof of Theorem~\ref{thm:smallest-lower}.
\end{remark}

The last step is to investigate languages $L$ of the simple structure described in the previous lemma and show how to construct a small DFA for $S(L)$.

\begin{lemma}
  Let $n \in \N$.
  Let $L = \bigcup_{i=1}^k x_i y_i^* z_i$ with $k \le n^4+n^3$ and $\abs{x_i z_i} \le n^3 + n^2$ for all $i \in \os{1, \dots k}$ and $\sum_{\ell \in Y} \ell \le n$ where $Y = \os{\abs{y_1}, \dots, \abs{y_k}}$.
  Then $S(L)$ is recognized by a DFA with $2^{\mathcal{O}(\sqrt{n \log n})}$ states.
  \label{lem:tl-simplified-dfa}
\end{lemma}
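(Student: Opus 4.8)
\medskip\noindent\textbf{Proof plan.}\quad
The bound will come from the quantity $P = \mathrm{lcm}\,\set{d \in Y}{d \ge 1}$. The distinct positive elements of $Y$ sum to at most $n$, so there are only $\bigO(\sqrt n)$ of them and each is at most $n$; hence $\log_2 P \le \sum_{d \in Y,\, d \ge 1}\log_2 d = \bigO(\sqrt n\log n)$, i.e.\ $P = 2^{\bigO(\sqrt{n\log n})}$. I will build a DFA for $S(L)$ with $P \cdot n^{\bigO(1)} = 2^{\bigO(\sqrt{n\log n})}$ states. The first step is to fix a threshold $C = \Theta(n^3)$, chosen larger than $T_1 + Z$, where $T_1 = \bigO(n^3)$ is a Fine--Wilf constant (valid for ultimately periodic words with pre-periods $\bigO(n^3)$ and periods $\le n$) and $Z := \max_i \abs{z_i} \le n^3 + n^2$, and to split $S(L) = (S(L) \intersect \Sigma^{\le C}) \union (S(L) \intersect \Sigma^{>C})$. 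The short part is thin and finite, hence accepted by a trie-shaped DFA with $n^{\bigO(1)}$ states; the final automaton is the product of this DFA with a DFA for the long part, at a polynomial cost, so it suffices to handle $S(L) \intersect \Sigma^{>C}$.

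The heart of the argument is a structural claim: for every length $\ell > C$ occurring in $L$, the lexicographically smallest word of length $\ell$ in $L$ equals $x_{i^\ast} y_{i^\ast}^{\,m} z_{i^\ast}$, where $i^\ast = i^\ast(\ell \bmod P)$ depends only on $\ell$ modulo $P$ (and $m$ is forced by $\ell$). To prove this I would argue that (i) the set of \emph{active} tubes at length $\ell$, namely the $i$ with $\ell \ge \abs{x_i z_i}$ and $\abs{y_i}\mid \ell-\abs{x_i z_i}$, depends only on $\ell \bmod P$; and (ii) for active $i,j$ the outcome of comparing $x_i y_i^{\,m_i} z_i$ with $x_j y_j^{\,m_j} z_j$ likewise depends only on $\ell \bmod P$ once $\ell > C$. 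For (ii): if the ultimately periodic words $\beta_i = x_i y_i^{\omega}$ and $\beta_j = x_j y_j^{\omega}$ differ, they first differ at a position below $T_1 < \ell$, which decides the comparison independently of $\ell$; if $\beta_i = \beta_j$, both words are a common prefix of $\beta_i$ followed by $z_i$, resp.\ $z_j$, so the comparison is influenced by $\ell$ only through the ultimately periodic letters of $\beta_i$ near position $\ell$, hence only through $\ell \bmod P$. Running the resulting tournament over the active tubes shows that its winner $i^\ast$ is a function of $\ell \bmod P$; as in Lemma~\ref{lem:swap}, ties occur only between literally equal words.

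Given this, a word $w$ with $\abs{w}>C$ lies in $S(L)$ exactly when, writing $i := i^\ast(\abs{w}\bmod P)$, one has $w \in x_i y_i^* z_i$; equivalently $w$ starts with $x_i$, its middle block is a prefix of $\beta_i$, and $w$ ends with $z_i$. The DFA I propose always carries at most $P$ worth of ``number of letters read modulo $P$'' information together with an $n^{\bigO(1)}$-size ``matching'' part, so it has $P\cdot n^{\bigO(1)}$ states; three observations keep the matching part polynomial. First, by the Fine--Wilf estimate, after reading $T_1 = \bigO(n^3)$ letters any tube still able to produce a word of length $>C$ is on its backbone $\beta_i$, and all backbones still consistent with the input coincide; so while the input follows that common backbone the matching part is just a node of the rolled-up trie of the $\beta_i$'s ($n^{\bigO(1)}$ nodes, since the periodic tails fold into cycles of length $\le n$), the position inside the current $y_i$-block being determined by the mod-$P$ counter, and the end-of-input acceptance test being a function of that node together with $\abs{w}\bmod P$. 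Second, the instant the input deviates from the backbone, every surviving tube must finish within $Z = \bigO(n^3)$ letters, so the deviation pins down $\abs{w}$; from then on the matching part merely walks a trie over the $\le k\,(Z+1) = n^{\bigO(1)}$ relevant suffixes of the $z_i$'s, with the conditions ``$\abs{w}\bmod P\in R_i$'' (where $R_i=\set{r}{i^\ast(r)=i}$) and ``$\abs{w}>C$'' precomputed into the set of accepting leaves. Third, the first $T_1$ letters are a short preprocessing phase that again needs only a backbone-trie node and a counter bounded by $C$. Altogether the DFA has $P\cdot n^{\bigO(1)} = 2^{\bigO(\sqrt{n\log n})}$ states.

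The step I expect to be the main obstacle is precisely keeping the matching part polynomial. The naive attempt --- track, for each of the $k$ tubes, whether the prefix read so far is still consistent with it --- records a subset of an $n^{\bigO(1)}$-size set and hence uses $2^{n^{\bigO(1)}}$ states, far too many. The Fine--Wilf collapse is what rescues it: after $\bigO(n^3)$ letters all surviving tubes share one backbone, so consistency with the backbones is a single trie node rather than an arbitrary subset, and the deviation from the backbone happens at one well-defined position, giving one aligned trie walk over the relevant $z_i$-suffixes. Taking $C$ comfortably above $T_1 + Z$ is what forces the short-word and long-word regimes to be disjoint and keeps both clean; verifying the size bounds, the disjointness, and the well-definedness of the transition function described above is then the bulk of the remaining, essentially routine, work.
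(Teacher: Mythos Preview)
Your approach is correct and shares the paper's high-level structure: a modular length counter (whose size is bounded via the Landau/Chrobak $\mathrm{lcm}$ estimate on sets of integers with bounded sum) together with a component that tracks which tubes $x_i y_i^* z_i$ the input is still consistent with, and a finality test that compares against the minimal word of the appropriate length.

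The substantive difference is that you explicitly bound the size of the consistency-tracking part via a Fine--Wilf argument (after $\bigO(n^3)$ letters, all surviving backbones $x_i y_i^\omega$ must coincide, so the surviving set is a single equivalence class rather than an arbitrary subset), whereas the paper simply describes a product of the counter component with $k$ constant-size ``activity components'' (each recording three bits for tube $i$) and then cites only the $\mathrm{lcm}$ bound for the final state estimate. Taken literally, that product has $2^{\Theta(k)} = 2^{\Theta(n^4)}$ states in the activity part alone, so an argument like your Fine--Wilf collapse is needed to show that only $n^{\bigO(1)}$ of those joint activity configurations are actually reachable --- precisely the ``main obstacle'' you flag in your last paragraph. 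Your handling of the post-deviation phase (a single aligned trie walk over $\bigO(kZ)$ suffixes of the $z_i$, with the target tube $i^\ast$ precomputed from $\abs{w}\bmod P$) is likewise the concrete mechanism behind what the paper leaves implicit when it says one can ``reconstruct $w$ from the set of active languages.'' In short, your argument is a more careful realization of the paper's sketch and supplies the ingredient it omits; the paper buys brevity, your version buys rigor.
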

\begin{proof}
  We describe how to construct a DFA of the desired size that recognizes the language $S(L)$. This DFA is the product automaton of multiple components.

  In one component (henceforth called the \emph{counter component}), we keep track of the length of the processed input as long as at most $n^3+n^2$ letters have been consumed.
  If more than $n^3+n^2$ letters have been consumed, we only keep track of the length of the processed input modulo all numbers $\abs{y_i}$ for $i \in \os{1, \dots, k}$.
  
  For each $i \in \os{1, \dots k}$, there is an additional component (henceforth called the \emph{$i$-th activity component}). In this component, we keep track of whether the currently processed prefix $u$ of the input is a prefix of a word in $x_i y_i^*$, whether $u$ is a prefix of a word in $x_i y_i^* z_i$ and whether $u \in x_i y_i^* z_i$.
  Note that if some prefix of the input is not a prefix of a word in $x_i y_i^* z_i$, no longer prefix of the input can be a prefix of a word in $x_i y_i^* z_i$.
  The information stored in the counter component suffices to compute the possible letters of $x_i y_i^* z_i$ allowed to be read in each step to maintain the prefix invariants.

  It remains to describe how to determine whether a state is final. To this end, we use the following procedure.
  First, we determine which sets of the form $x_i y_i^* z_i$ the input word leading to the considered state belongs to. These languages are called the \emph{active languages} of the state. They can be obtained from the activity components of the state. If there are no active languages, the state is immediately marked as not final.
  If the length of the input word $w$ leading to the considered state is $n^3 + n^2$ or less, we can obtain $\abs{w}$ from the counter component and reconstruct $w$ from the set of active languages.
  If the length of the input is larger than $n^3 + n^2$, we cannot fully recover the input from the information stored in the state. However, we can determine the shortest word $w$ with $\abs{w} > n^3 + n^2$ such that $\abs{w}$ is consistent with the length information stored in the counter component and $w$ itself is consistent with the set of active languages.
  In either case, we then compute the set $A$ of all words of length $\abs{w}$ that belong to any (possibly not active) language $x_i y_i^* z_i$ with $1 \le i \le k$.
  If $w$ is the smallest word in $A$, the state is final, otherwise it is not final.
  
  The desired upper bound on the number of states follows from known estimates on the least common multiple of a set of natural numbers with a given sum; see \eg \cite{Chrobak86}.
\end{proof}

We can now combine the previous lemmas to obtain an upper bound on the state complexity of $S(L)$.

\begin{theorem}
  Let $L$ be a regular language that is recognized by a DFA with $n$ states.
  Then $S(L)$ is recognized by a DFA with $2^{\mathcal{O}(\sqrt{n \log n})}$ states.
  \label{thm:smallest-upper}
\end{theorem}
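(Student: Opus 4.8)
The plan is to chain together the four technical lemmas of this section. We may assume $n \ge 3$: for $n \le 2$ a short direct case analysis shows that $S(L)$ is recognized by a DFA with a bounded number of states, which is certainly $2^{\bigO(\sqrt{n\log n})}$.

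First I would apply Lemma~\ref{lem:tl-simplify} to the given DFA $\A$. This yields an integer $k \le n^4 + n^3$ and tuples $(x_1,y_1,z_1),\dots,(x_k,y_k,z_k) \in (\Sigma^*)^3$ such that, setting $L' := \bigcup_{i=1}^k x_i y_i^* z_i$, we have $S(L(\A)) \subseteq L' \subseteq L(\A)$, and moreover $\abs{x_i z_i} \le n^3 + n^2$ for every $i$ and $\sum_{\ell \in Y}\ell \le n$ where $Y = \os{\abs{y_1},\dots,\abs{y_k}}$.

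The next step is to prove that $S(L') = S(L(\A))$. For any language $M$ and any length $m$, the slice $S(M) \intersect \Sigma^m$ is empty when $M \intersect \Sigma^m = \emptyset$ and is otherwise the singleton consisting of the radix-smallest word of length $m$ in $M$; hence it suffices to compare $L'$ and $L(\A)$ length by length. Fix $m$. Since $L' \subseteq L(\A)$, nonemptiness of $L' \intersect \Sigma^m$ implies nonemptiness of $L(\A)\intersect\Sigma^m$; conversely, if $L(\A)\intersect\Sigma^m \ne \emptyset$, then its smallest length-$m$ word lies in $S(L(\A)) \subseteq L'$, so $L'\intersect\Sigma^m \ne \emptyset$ as well. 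When both slices are nonempty, the smallest length-$m$ word $w$ of $L(\A)$ lies in $L'$, and because $L' \subseteq L(\A)$ no word of $L'$ of length $m$ is strictly smaller than $w$; thus $w$ is also the smallest length-$m$ word of $L'$. Therefore $S(L')\intersect\Sigma^m = S(L(\A))\intersect\Sigma^m$ for every $m$, i.e. $S(L') = S(L(\A))$. Finally, $L'$ meets the hypotheses of Lemma~\ref{lem:tl-simplified-dfa} verbatim, with the very same parameter $n$ (the bounds $k \le n^4 + n^3$, $\abs{x_i z_i} \le n^3 + n^2$, and $\sum_{\ell\in Y}\ell \le n$ are precisely what that lemma requires). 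That lemma then produces a DFA with $2^{\bigO(\sqrt{n\log n})}$ states recognizing $S(L')$, which by the previous step is a DFA for $S(L(\A))$, as desired.

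The main obstacle has in fact already been cleared by the preceding lemmas: the state-counting via the least-common-multiple estimate inside Lemma~\ref{lem:tl-simplified-dfa}, and the structural reductions of Lemmas~\ref{lem:fact}--\ref{lem:tl-simplify}. What remains at this stage is only the elementary-but-essential identity $S(L') = S(L(\A))$ and the bookkeeping observation that the output parameters of Lemma~\ref{lem:tl-simplify} are exactly the input parameters of Lemma~\ref{lem:tl-simplified-dfa}; no genuinely new difficulty arises here.
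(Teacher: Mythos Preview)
Your proposal is correct and follows essentially the same route as the paper: apply Lemma~\ref{lem:tl-simplify} to obtain $L'$ with $S(L)\subseteq L'\subseteq L$, deduce $S(L')=S(L)$, and then invoke Lemma~\ref{lem:tl-simplified-dfa}. Your length-by-length argument for $S(L')=S(L)$ is in fact more explicit than the paper's (which compresses it to the observation that $S(L)\subseteq L'\subseteq L$ forces $S(L')=S(L)$), and your handling of the $n\le 2$ case is a detail the paper omits.
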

\begin{proof}
  By Lemma~\ref{lem:tl-simplify}, we know that there exists a language $L'$ of the form described in the statement of Lemma~\ref{lem:tl-simplified-dfa} with $S(L) \subseteq L' \subseteq L$. 
  Since $L' \subseteq L$ implies $S(L') \subseteq S(L)$ and since $S(S(L)) = S(L)$, this also means that $S(L') = S(L)$. Lemma~\ref{lem:tl-simplified-dfa} now shows that there exists a DFA of the desired size.
\end{proof}

To show that the result is optimal, we provide a matching lower bound.

\begin{theorem}
  There exists a family of DFA $(\A_n)_{n \in \N}$ over a binary alphabet such that $\A_n$ has $n$ states and every NFA for $S(L(\A_n))$ has $2^{\Omega(\sqrt{n \log n})}$ states.
  \label{thm:smallest-lower}
\end{theorem}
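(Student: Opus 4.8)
The plan is to exhibit, for a growing parameter $r$, a DFA $\A$ over $\Sigma=\os{0,1}$ (with $0<1$) whose number of states $N$ satisfies $N=\Theta\bigl(\sum_{i\le r}p_i\bigr)$, where $p_1<\cdots<p_r$ are the first $r$ primes, and such that $S(L(\A))$ carries a ``hidden'' modular complexity with modulus $P:=\mathrm{lcm}(p_1,\dots,p_r)=\prod_i p_i$. The arithmetic is arranged so the exponent comes out right: by Chebyshev's estimates $\log P=\vartheta(p_r)=\Theta(r\log r)$ while $\sum_{i\le r}p_i=\Theta(r^2\log r)$, so $\log N=\Theta(\log r)$ and $\sqrt{N\log N}=\Theta(r\log r)=\Theta(\log P)$, i.e.\ $P=2^{\Theta(\sqrt{N\log N})}$. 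For an arbitrary $n$ one takes the largest $r$ with $\sum_{i\le r}p_i\le n$ and pads $\A$ with dead states; since $p_{r+1}=o\bigl(\sum_{i\le r}p_i\bigr)$ this costs nothing in the exponent.

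For the construction I would take $L=\bigcup_{i=1}^r L_i$ to be a union of $r$ very simple ``lollipop'' components: component $L_i$ is built around a cycle of length $p_i$, contributes at most one word of each length, and its words exist exactly at the lengths lying in one residue class modulo $p_i$; the components are told apart by a short binary address prefix, so a DFA for $L$ is essentially the disjoint union of an $O(r)$-state address tree with $r$ cycles of total size $\sum_i p_i$, i.e.\ $O(N)$ states. The point of the addressing and the chosen ordering is that for every large length $n$ the lexicographically smallest word of length $n$ in $L$ is the word produced by the component $L_{i^*(n)}$ of smallest index that is ``available'' at length $n$, and the assertion $i^*(n)=i$ packages one congruence modulo $p_i$ together with $i-1$ incongruences modulo $p_1,\dots,p_{i-1}$ — so $i^*(n)$ genuinely depends on $n\bmod p_1,\dots,n\bmod p_r$ simultaneously. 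From this description one writes $S(L)$ down explicitly; it is thin but not of the simple form of Lemma~\ref{lem:tl-simplify}, which is the example promised in the remark following that lemma.

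The lower bound is then an NFA fooling-set argument. Any NFA accepting $S(L)$ must, on input $w$, certify that $w$ is precisely the word contributed by the winning component at length $\abs{w}$, and in particular certify that no smaller-indexed component is available at that length — a conjunction of incongruences spread over all the primes that an NFA cannot replace by a single nondeterministic guess. Concretely I would build $2^{\Omega(\sqrt{N\log N})}$ pairs $(x_\mu,y_\mu)$ with $x_\mu y_\mu\in S(L)$ and, for $\mu\neq\nu$, $x_\mu y_\nu\notin S(L)$ or $x_\nu y_\mu\notin S(L)$, where the lengths of the $x_\mu$'s encode pairwise-incompatible residue data modulo $P$; the underlying obstruction has the same flavour as recognizing $\os{0^{kP}:k\ge 0}$, which already needs $\Omega(P)$ states even for an NFA via the standard fooling set $(0^j,0^{P-j})_{0\le j<P}$.

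The step I expect to be the main obstacle is making this fooling set genuinely of size $2^{\Omega(\sqrt{N\log N})}$. The danger is that, depending on how the components are laid out, an NFA could ``shortcut'' the conjunction: if the hard slice of $S(L)$ turned out to be a union of arithmetic progressions rather than a true conjunctive modular condition, its NFA complexity would collapse to $\mathrm{poly}(N)$, and both the fooling-set bound and the biclique-cover bound would only give $\mathrm{polylog}(P)$. Avoiding this forces a careful choice of the residue offsets of the components — plausibly a covering-system-style selection — so that the winning-component condition really does compel an NFA to track all of $n\bmod p_1,\dots,n\bmod p_r$ at once; the fooling-set (or rectangle-cover) analysis then has to be carried through with exactly that structure in hand. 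The lengths that are too small for some components to be ``available'' contribute only a finite tail and can be absorbed into the argument.
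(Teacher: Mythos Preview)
Your parameter choice and asymptotics are right, and your construction is essentially the paper's: the paper addresses component $i$ by the prefix $1^i 0^{k-i+1}$ and declares component $i$ ``available'' at length $k+1+m$ exactly when $m\not\equiv 0\pmod{p_i}$. Where you diverge is the lower-bound mechanism, and the obstacle you flag is real for the route you chose but is an artifact of that route.

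You are trying to extract a large fooling set from the \emph{entire} structure of $S(L)$, which forces you to worry about whether the ``winning component'' function $i^*(n)$ really encodes all residues simultaneously, and whether a covering-system choice of offsets is needed. The paper sidesteps all of this by adding one extra ingredient you do not have: a fallback component $1^*$ that is lexicographically \emph{largest} and hence wins precisely when every primed component is unavailable, i.e., at the lengths $k+1+m$ with $P\mid m$. This manufactures a single period-$P$ slice of $S(L)$ consisting of the all-ones words $1^{k+1+P},\,1^{k+1+2P},\dots$, and now a one-line pumping argument finishes: any NFA with fewer than $P$ states accepting $1^{k+1+P}$ has a loop of length $\ell\in\{1,\dots,P-1\}$, hence also accepts $1^{k+1+P+\ell}$; but some $p_i\nmid\ell$, so $p_i\nmid P+\ell$, so $1^i 0^{k-i+1}1^{P+\ell}\in L$ is a strictly smaller word of that length, contradicting membership in $S(L)$. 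No fooling set, no rectangle cover, no offset engineering --- one distinguished word and the pigeonhole principle suffice. Your own construction would admit the same argument as soon as you append such a last-resort component; without it, you have no single arithmetic progression of period $P$ inside $S(L)$ to pump on, and you are stuck proving NFA lower bounds for a thin language whose structure is a disjunction-of-incongruences, which is exactly the difficulty you anticipated.
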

\begin{proof}
  For $i \in \os{1, \dots k}$, let $p_i$ be the $i$-th prime number and let $p = p_1 \cdots p_k$. We define a language
  \begin{equation*}
    L = 1^* \union \bigcup_{1 \le i \le k} 1^i 0^{k-i+1} \os{1, 1^2, \dots, 1^{p_i-1}} (1^{p_i})^*.
  \end{equation*}
  It is easy to see that $L$ is recognized by a DFA with $k^2 + p_1 + \dots + p_k$ states.
  We show that $S(L)$ is not recognized by any NFA with less than $p$ states.
  From known estimates on the prime numbers (\eg \cite[Sec.~2.7]{Bach&Shallit:1996}), this suffices to prove our claim.
  
  Let $\A$ be a NFA for~$S(L)$ and assume, for the sake of contradiction, that~$\A$ has less than $p$ states.
  Note that since for each $i \in \os{1, \dots, k}$, the integer $p$ is a multiple of $p_i$, the language $L$ does not contain any word of the form $1^i 0^{k-i+1} 1^p$.
  Therefore, the word $1^{k+1+p}$ belongs to $S(L)$ and by assumption, an accepting path for this word in $\A$ must contain a loop of some length $\ell \in \os{1, \dots, p-1}$. But then $1^{k+1+p+\ell}$ is accepted by $\A$, too. However, since $1 \le \ell < p$, there exists some $i \in \os{1, \dots, k}$ such that $p_i$ does not divide $\ell$. This means that $p_i$ also does not divide $p + \ell$.
  Thus, $1^i 0^{k-i+1} 1^{p+\ell} \in L$, contradicting the fact that $1^{k + 1 + p + \ell}$ belongs to $S(L)$.
\end{proof}

Combining the previous two theorems, we obtain the following corollary.

\begin{corollary}
  Let $L$ be a language that is recognized by a DFA with $n$ states.
  Then, in general, $2^{\Theta(\sqrt{n \log n})}$ states are necessary and sufficient for a DFA or NFA to recognize $S(L)$.
\end{corollary}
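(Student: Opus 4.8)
The plan is to derive the corollary by directly combining Theorem~\ref{thm:smallest-upper} and Theorem~\ref{thm:smallest-lower}, using only the elementary observation that every DFA is in particular an NFA. I would first establish sufficiency, then necessity, and finally note why the ``in general'' qualifier is exactly what one should expect.

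For the upper bound, I would invoke Theorem~\ref{thm:smallest-upper}: if $L$ is recognized by an $n$-state DFA, then $S(L)$ is recognized by a DFA — hence also by an NFA — with $2^{\mathcal{O}(\sqrt{n\log n})}$ states. Since the class of languages recognizable by NFAs with a given number of states contains all languages recognizable by DFAs with that number of states, the same bound is a fortiori an upper bound when one is only required to produce an NFA. So $2^{\mathcal{O}(\sqrt{n\log n})}$ states always suffice, for either model.

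For the lower bound, I would appeal to Theorem~\ref{thm:smallest-lower}, which exhibits a family $(\A_n)_{n\in\N}$ of $n$-state DFAs over a binary alphabet for which every NFA recognizing $S(L(\A_n))$ has $2^{\Omega(\sqrt{n\log n})}$ states. Since any DFA for $S(L(\A_n))$ is also an NFA for that language, the same lower bound applies to DFAs; thus no construction — deterministic or not — can beat $2^{\Omega(\sqrt{n\log n})}$ states on this family, so this many states are necessary in the worst case. Combining the matching $\mathcal{O}$ and $\Omega$ bounds yields the stated $2^{\Theta(\sqrt{n\log n})}$, which also shows, as announced in the introduction, that nondeterminism provides no asymptotic savings for this operation.

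The only point requiring a word of care — and it is already handled inside the proof of Theorem~\ref{thm:smallest-lower} — is bookkeeping the relationship between the size $n$ of the hard instance and the primorial $p = p_1 \cdots p_k$ that lower-bounds the NFA size: one uses standard estimates on the $k$-th prime and on $\sum_{i \le k} p_i$ to verify that $\log p = \Theta(\sqrt{n \log n})$ when $n$ is, up to constants, the number of states of $\A_n$. Apart from invoking those estimates, there is no genuine obstacle; the corollary is purely a packaging of the two theorems, so the hard work is entirely contained in the preceding structural lemmas (Lemmas~\ref{lem:swap}--\ref{lem:tl-simplified-dfa}) and in the prime-based construction of Theorem~\ref{thm:smallest-lower}.
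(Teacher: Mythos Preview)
Your proposal is correct and matches the paper's approach: the corollary is stated there without proof, simply as the combination of Theorem~\ref{thm:smallest-upper} and Theorem~\ref{thm:smallest-lower}. Your added remarks about DFAs being NFAs (for transferring bounds in both directions) and about the prime estimates are accurate elaborations of what the paper leaves implicit.
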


By reversing the alphabet ordering, we immediately obtain similar results for largest words.

\begin{corollary}
  Let $L$ be a language that is recognized by a DFA with $n$ states.
  Then, in general, $2^{\Theta(\sqrt{n \log n})}$ states are necessary and sufficient for a DFA or NFA to recognize $B(L)$.
\end{corollary}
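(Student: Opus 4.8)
The plan is to obtain both bounds from the corresponding results for $S(L)$, namely Theorems~\ref{thm:smallest-upper} and~\ref{thm:smallest-lower}, by exploiting the duality between smallest and largest words. The first step is to make this duality precise. Let $\prec$ denote the reverse of the order $<$ on $\Sigma$. Because radix order compares two words by length first and only then letter by letter, and the length comparison does not depend on the orientation of $\Sigma$, one checks directly that $u \prec v$ in the radix order induced by $\prec$ holds precisely when $v < u$. Hence the set of $\prec$-smallest words of each length in $L$ is exactly the set of $<$-largest words of each length in $L$; writing $S$ and $B$ for the two operators and tagging them with the order used, this reads $S_\prec(L) = B_<(L) = B(L)$. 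The second, equally routine, ingredient is that the transition structure of an NFA or DFA is oblivious to the order placed on $\Sigma$, so any state-complexity bound established for one orientation of the alphabet transfers verbatim to the other.

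For the upper bound, start from a DFA $\A$ with $n$ states recognizing $L$ and regard it as a DFA over the reverse-ordered alphabet $(\Sigma, \prec)$. The proof of Theorem~\ref{thm:smallest-upper} (through the chain of lemmas in this section) uses nothing about $<$ beyond the fact that it is a linear order on $\Sigma$, so it yields a DFA with $2^{\mathcal{O}(\sqrt{n\log n})}$ states recognizing $S_\prec(L(\A)) = B(L)$. As a DFA is in particular an NFA, this settles the sufficiency direction for both models.

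For the lower bound, take the family $(\A_n)_{n \in \N}$ over $\os{0,1}$ furnished by Theorem~\ref{thm:smallest-lower} and let $\phi$ be the letter-swap morphism defined by $0 \mapsto 1$ and $1 \mapsto 0$. Let $\B_n$ be the DFA obtained from $\A_n$ by turning every $0$-transition into a $1$-transition and vice versa, so that $\B_n$ has $n$ states and $L(\B_n) = \phi(L(\A_n))$. Since $\phi$ is a length-preserving bijection of $\Sigma^*$ that reverses the letter order, it reverses radix order and therefore sends the smallest word of each length to the largest word of each length, so $B(L(\B_n)) = \phi(S(L(\A_n)))$. Given any NFA for $B(L(\B_n))$, relabelling each transition by $\phi$ produces an NFA with the same number of states recognizing $\phi(B(L(\B_n))) = S(L(\A_n))$, because $\phi \circ \phi$ is the identity. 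By Theorem~\ref{thm:smallest-lower} the latter requires $2^{\Omega(\sqrt{n\log n})}$ states, so the same lower bound holds for every NFA — and in particular every DFA — recognizing $B(L(\B_n))$. Together with the upper bound, this proves the corollary.

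I do not expect a genuine obstacle here; the only point that deserves care is the interaction of order reversal with radix order. One must confirm that flipping the order on $\Sigma$ flips the induced word order only \emph{within} each length class, while leaving the ``shorter word comes first'' rule untouched, so that the $\prec$-smallest and the $<$-largest words of a given length are literally the same word. Once this is checked, both directions are direct transfers of the results already proved for $S(L)$.
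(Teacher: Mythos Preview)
Your proposal is correct and follows essentially the same approach as the paper, which simply notes that the corollary follows ``by reversing the alphabet ordering.'' Your treatment is more explicit—spelling out the order-reversal duality and, for the lower bound, relabelling via the letter-swap morphism—but the underlying idea is identical.
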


\section{The State Complexity of Computing Successors}

One approach to efficient enumeration of a regular language $L$ is
constructing a transducer that reads a word and outputs its $L$-successor~\cite{AngrandS10,Frougny97}.
We consider transducers that operate from left to right.
Since the output letter in each step might depend on letters that have not yet
been read, this transducer needs to be nondeterministic.
However, the construction can be made \emph{unambiguous}, meaning that for any given input, at most one computation path is accepting and yields the desired output word.
In this paper, we prove that, in general, $2^{\Theta(\sqrt{n \log n})}$ states are necessary and sufficient for a transducer that performs this computation.

Our proof is split into two parts.
First, we construct a transducer that only maps $L$-length-preserving words to their corresponding $L$-successors. All other words are rejected.
This construction heavily relies on results from the previous section.
Then we extend this transducer to $L$-length-increasing words by using a technique called \emph{padding}.
For the first part, we also need the following result.

\begin{theorem}
  Let $L \subseteq \Sigma^*$ be a thin language that is recognized by a DFA with~$n$ states. Then the languages
  \begin{align*}
    L_{\le} & = \set{v \in \Sigma^*}{\exists u \in L \colon \abs{u} = \abs{v} \text{and~} v \le u} \text{and} \\
    L_{\ge} & = \set{v \in \Sigma^*}{\exists u \in L \colon \abs{u} = \abs{v} \text{and~} v \ge u}
  \end{align*}
  are recognized by UFA with~$2n$ states.
  \label{thm:thin}
\end{theorem}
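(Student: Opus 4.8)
The plan is to construct, from a fixed DFA $\A = (Q, \Sigma, {}\cdot{}, q_0, F)$ with $\abs{Q} = n$ and $L(\A) = L$, an NFA $\B$ for $L_{\le}$ whose state set is $Q \times \os{=, <}$ (two marked copies of $Q$), and then to show that thinness of $L$ forces $\B$ to be unambiguous. The idea is that $v$ lies in $L_{\le}$ exactly when the unique word $u \in L \intersect \Sigma^{\abs{v}}$ (if there is one) satisfies $u \ge v$, i.e.\ either $u = v$, or $u$ and $v$ agree on a common prefix after which $u$ carries the strictly larger letter. The automaton $\B$ guesses which of these holds: a state $(q,=)$ means $\B$ has read a prefix of $v$ that agrees with a prefix of the candidate $u$ and is simulating $\A$ on that common prefix; a state $(q,<)$ means $\B$ has already read a letter of $v$ strictly smaller than the corresponding letter of $u$, so the remaining letters of $u$ are unconstrained and $\B$ simulates $\A$ along some guessed suffix of $u$. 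Accordingly, the transitions are $(q,=)\xrightarrow{a}(q\cdot a,=)$ for every $a \in \Sigma$; $(q,=)\xrightarrow{a}(q\cdot b,<)$ for all $a, b \in \Sigma$ with $a < b$; and $(q,<)\xrightarrow{a}(q\cdot b,<)$ for all $a, b \in \Sigma$. The initial state is $(q_0,=)$, the final states are $\set{(q,\bullet)}{q\in F,\ \bullet\in\os{=,<}}$, and there are $2n$ states in total.

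For correctness I would check $L(\B) = L_{\le}$ directly from this semantics. An accepting run that never leaves the ``$=$'' copy is simply the run of $\A$ on $v$, so it witnesses $v \in L \subseteq L_{\le}$; an accepting run that switches to the ``$<$'' copy after reading $v_1 \cdots v_{j}$ traces out a word $u = v_1 \cdots v_{j-1}\, b\, w \in L$ of length $\abs{v}$ with $b > v_j$, hence $u > v$ and $v \in L_{\le}$. Conversely, given $v \in L_{\le}$ with witness $u \ge v$: if $u = v$, run $\A$ inside the ``$=$'' copy; if $u > v$, let $j$ be the first position at which they differ, follow $\A$ on $v_1\cdots v_{j-1}$ in the ``$=$'' copy, take the transition to $(q_0 \cdot u_1 \cdots u_j,\, <)$ when reading $v_j$ (legal since $u_j > v_j$), and then follow $\A$ on $u_{j+1}\cdots u_{\abs{v}}$ inside the ``$<$'' copy, ending in $F$.

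The main obstacle is unambiguity, and this is precisely where the thinness hypothesis does the work. First, no $v$ can have both an accepting run ending in the ``$=$'' copy and one ending in the ``$<$'' copy: the former forces $v \in L$, while the latter produces a word $u \in L$ with $\abs{u} = \abs{v}$ and $u \ne v$, contradicting $\abs{L \intersect \Sigma^{\abs{v}}} \le 1$. An accepting run ending in the ``$=$'' copy is the deterministic run of $\A$, hence unique. For a run ending in the ``$<$'' copy, the word $u$ it spells out is a word of $L$ of length $\abs{v}$, hence \emph{the} unique such word, so the switching position $j$ (the first index where $v$ and $u$ differ) and the letter $u_j$ are determined; and the guessed suffix is forced as well, since two distinct length-$(\abs{v}-j)$ paths in $\A$ from $q_0 \cdot u_1\cdots u_j$ to $F$ would give two distinct words of $L$ of length $\abs{v}$, again contradicting thinness. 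Thus the sequence of states visited in the ``$<$'' copy is unique, so $\B$ is a UFA with $2n$ states recognizing $L_{\le}$. Finally, replacing ``$a < b$'' by ``$a > b$'' in the construction (equivalently, reversing the order on $\Sigma$) yields the analogous $2n$-state UFA for $L_{\ge}$.
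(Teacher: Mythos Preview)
Your proposal is correct and is essentially identical to the paper's own proof: the paper builds the same NFA on $Q \times \os{0,1}$ (your $Q \times \os{=,<}$) with the same three transition rules, and argues unambiguity by observing that the first components of any accepting run spell out the unique length-$\abs{v}$ word of $L$ while the second components are forced to flip exactly at the first position where $v$ and that word differ. Your write-up is simply more detailed in the correctness and unambiguity verification than the paper's.
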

\begin{proof}
  Let $\A = (Q, \Sigma, {}\cdot{}, q_0, F)$ be a DFA for $L$ and let $n = \abs{Q}$.
  We construct a UFA with~$2n$ states for $L_{\le}$. The statement for $L_{\ge}$ follows by symmetry.
  
  The state set of the UFA is $Q \times \os{0, 1}$, the initial state is $(q_0, 0)$ and the set of final states is $F \times \os{0, 1}$.
  The transitions are
  \begin{alignat*}{2}
    (q, 0) &\xrightarrow{a} (q \cdot a, 0) && \qquad\text{for all $q \in Q$ and $a \in \Sigma$}, \\
    (q, 0) &\xrightarrow{a} (q \cdot b, 1) && \qquad\text{for all $q \in Q$ and $a, b \in \Sigma$ with $a < b$}, \\
    (q, 1) &\xrightarrow{a} (q \cdot b, 1) && \qquad\text{for all $q \in Q$ and $a, b \in \Sigma$}.
  \end{alignat*}
  
  It is easy to verify that this automaton indeed recognizes $L_{\le}$.
  To see that this automaton is unambiguous, consider an accepting run of a word $w$ of length~$\ell$.
  Note that the sequence of first components of the states in this run yield an accepting path of length $\ell$ in $\A$. Since $L(\A)$ is thin, this path is unique. Therefore, the sequence of first components is uniquely defined.
  The second components are then uniquely defined, too: they are $0$ up to the first position where~$w$ differs from the unique word of length $\ell$ in $L$, and $1$ afterwards.
\end{proof}

For a language $L \subseteq \Sigma^*$, we denote by $B_{\ge}(L)$ the language of all words from $\Sigma^*$ such that there exists no strictly larger word of the same length in $L$.
Combining Theorem~\ref{thm:smallest-upper} and Theorem~\ref{thm:thin}, the following corollary is immediate.

\begin{corollary}
  Let $L$ be a language that is recognized by a DFA with~$n$ states.
  Then there exists a UFA with $2^{\mathcal{O}(\sqrt{n \log n})}$ states that recognizes the language $B_{\ge}(L)$.
  \label{crl:upper-bigger}
\end{corollary}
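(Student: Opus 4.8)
The plan is to exhibit $B_{\ge}(L)$ in the exact shape that Theorems~\ref{thm:smallest-upper} and~\ref{thm:thin} are designed to handle, namely as $(B(L'))_{\ge}$ for a language $L'$ with a small DFA. Unfolding the definition, a word $w$ lies in $B_{\ge}(L)$ if and only if either $L \intersect \Sigma^{\abs{w}} = \emptyset$, or $L \intersect \Sigma^{\abs{w}} \ne \emptyset$ and $w$ is $\ge$ the unique largest word of length $\abs{w}$ in $L$ — and that largest word is precisely the element of $B(L)$ of length $\abs{w}$. So $B_{\ge}(L)$ would be literally $(B(L))_{\ge}$ were it not for the first case: on lengths not occurring in $L$, every word must be accepted, whereas $B(L)$, and hence $(B(L))_{\ge}$, contains no word of that length at all. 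This is the one discrepancy the proof has to patch.

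To patch it, let $0 = \min\Sigma$ and set $L' = L \union 0^*$. A DFA for $L'$ has at most $2n$ states (product of a DFA for $L$ with a two-state DFA for $0^*$), every length is now nonempty in $L'$, and the largest word of length $\ell$ in $L'$ coincides with that of $L$ when $L \intersect \Sigma^\ell \ne \emptyset$ and equals $0^\ell$ otherwise. Rechecking the two cases above, one gets the identity $B_{\ge}(L) = (B(L'))_{\ge}$. Now the two ingredients apply mechanically: $B(L')$ is thin, and by Theorem~\ref{thm:smallest-upper} applied to $L'$ with the alphabet order reversed (which turns lexicographically smallest words into largest ones and leaves the number of DFA states unchanged), $B(L')$ is recognized by a DFA with $2^{\mathcal{O}(\sqrt{n \log n})}$ states, since a DFA for $L'$ has $\mathcal{O}(n)$ states. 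Feeding this DFA into Theorem~\ref{thm:thin} with the thin language $B(L')$ in the role of $L$ yields a UFA for $(B(L'))_{\ge} = B_{\ge}(L)$ with twice as many states, still $2^{\mathcal{O}(\sqrt{n \log n})}$.

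I do not expect a real obstacle: the result is a bookkeeping combination of the two theorems, and the only nontrivial point is the empty-length subtlety above — remembering that $B_{\ge}(L)$ must accept \emph{all} words whose length is absent from $L$, which forces the padding step $L \mapsto L \union 0^*$. An equivalent route would avoid padding and instead write $B_{\ge}(L)$ as the disjoint union of $(B(L))_{\ge}$ with a DFA for $\set{w \in \Sigma^*}{L \intersect \Sigma^{\abs{w}} = \emptyset}$, whose size is controlled by the unary NFA-to-DFA bound and whose disjointness from $(B(L))_{\ge}$ keeps the union unambiguous; I would present the padding version as the cleaner one. Everything else is routine verification of the language identities and of the claimed state counts.
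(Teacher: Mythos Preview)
Your proof is correct and follows the same route the paper has in mind: the paper states the corollary is ``immediate'' from combining Theorem~\ref{thm:smallest-upper} (dualized to~$B$) with Theorem~\ref{thm:thin}, and gives no further detail. You have actually been more careful than the paper on one point: the discrepancy you identify at lengths $\ell$ absent from $L$ --- where $B_{\ge}(L)$ must contain all of $\Sigma^\ell$ while $(B(L))_{\ge}$ is empty --- is real, and your $L \mapsto L \cup 0^*$ patch (or the alternative disjoint-union route you sketch) cleanly closes a gap the paper's one-line justification glosses over.
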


For a language $L \subseteq \Sigma^*$, we define
\begin{equation*}
  X(L) = \set{u \in \Sigma^*}{\forall v \in L \colon \abs{u} \ne \abs{v}}.
\end{equation*}

If $L$ is regular, it is easy to construct an NFA for the complement of $X(L)$, henceforth denoted as $\overline{X(L)}$. To this end, we take a DFA for $L$ and replace the label of each transition with all letters from~$\Sigma$.
This NFA can also be viewed as an NFA over the unary alphabet $\os{\Sigma}$; here, $\Sigma$ is interpreted as a letter, not a set.
It can be converted to a DFA for $\overline{X(L)}$ by using Chrobak's efficient determinization procedure for unary NFA~\cite{Chrobak86}. The resulting DFA can then be complemented to obtain a DFA for $X(L)$:

\begin{corollary}
  Let $L$ be a language that is recognized by a DFA with~$n$ states.
  Then there exists a DFA with $2^{\mathcal{O}(\sqrt{n \log n})}$ states that recognizes the language $X(L)$.
  \label{crl:upper-exclude}
\end{corollary}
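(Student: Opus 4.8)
The plan is to follow exactly the recipe sketched in the paragraph preceding the statement, making each step quantitative. First I would fix a DFA $\A = (Q, \Sigma, {}\cdot{}, q_0, F)$ for $L$ with $\abs{Q} = n$ and build an NFA $\B$ for $\overline{X(L)}$ on the unary alphabet $\os{\Sigma}$: keep the same state set $Q$, the same initial state $q_0$ and the same final set $F$, but replace every transition $p \xrightarrow{a} q$ of $\A$ by a single unary transition $p \to q$. A unary word of length $m$ is accepted by $\B$ iff there is some word $w \in \Sigma^m$ with $q_0 \cdot w \in F$, i.e.\ iff $L \intersect \Sigma^m \ne \emptyset$; so $\B$ recognizes exactly $\set{m \in \N}{L \intersect \Sigma^m \ne \emptyset}$, which is precisely the set of lengths realized by $\overline{X(L)}$ as a unary language. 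Thus $X(L)$, viewed over the unary alphabet $\os{\Sigma}$, is the complement of $L(\B)$.

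Next I would invoke Chrobak's theorem on unary NFA~\cite{Chrobak86}: any $n$-state unary NFA can be converted into an equivalent DFA whose number of states is $2^{\bigO(\sqrt{n \log n})}$ (more precisely, a DFA in "Chrobak normal form" with a tail of length $\bigO(n^2)$ and a cycle whose length is a least common multiple of numbers summing to at most $n$, hence bounded by $e^{(1+o(1))\sqrt{n\log n}}$). Applying this to $\B$ yields a DFA $\B'$ over $\os{\Sigma}$ with $2^{\bigO(\sqrt{n\log n})}$ states recognizing $\overline{X(L)}$. Complementing $\B'$ by swapping final and non-final states (a DFA operation that does not change the state count) gives a DFA for $X(L)$ over $\os{\Sigma}$ with $2^{\bigO(\sqrt{n\log n})}$ states. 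Finally, to obtain a DFA over the original alphabet $\Sigma$ rather than the unary alphabet, I would relabel every transition of this DFA with all letters of $\Sigma$ simultaneously; since membership of $u$ in $X(L)$ depends only on $\abs{u}$, this relabeled automaton recognizes $X(L) \subseteq \Sigma^*$ and still has $2^{\bigO(\sqrt{n\log n})}$ states.

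I do not expect any genuine obstacle here: the only technical content is the correct bookkeeping in the two reductions (DFA-to-unary-NFA and then unary-NFA-to-DFA), and the size bound is entirely inherited from the classical estimate for $\lcm$ of integers with bounded sum, the same estimate already used in the proof of Lemma~\ref{lem:tl-simplified-dfa}. The one point worth stating carefully is the passage between the unary alphabet $\os{\Sigma}$ and $\Sigma$ in both directions, so that the final automaton is over $\Sigma$ as required; this is routine since $X(L)$ is by definition a union of length classes $\Sigma^m$.
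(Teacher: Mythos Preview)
Your proposal is correct and follows essentially the same route as the paper: build an $n$-state unary NFA for $\overline{X(L)}$ by relabeling all transitions of the DFA for $L$, apply Chrobak's determinization to get a unary DFA of size $2^{\bigO(\sqrt{n\log n})}$, complement, and relabel back over $\Sigma$. The paper's own argument is the paragraph immediately preceding the corollary and contains exactly these steps; your write-up merely spells out the bookkeeping (the Chrobak normal form bound and the passage between $\os{\Sigma}$ and $\Sigma$) in more detail.
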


We now use the previous results to prove an upper bound on the size of a transducer performing a variant of the $L$-successor computation that only works for $L$-length-preserving words.

\begin{theorem}
  Let $L$ be a language that is recognized by a DFA with~$n$ states.
  Then there exists an unambiguous finite-state transducer with $2^{\mathcal{O}(\sqrt{n \log n})}$ states that rejects all $L$-length-increasing words and maps every $L$-length-preserving word to its $L$-successor.
  \label{thm:upper-length-preserving}
\end{theorem}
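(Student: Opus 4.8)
The plan is to build the transducer from an $n$-state DFA $\A = (Q, \Sigma, {}\cdot{}, q_0, F)$ for $L$, guided by the following greedy description of the length-preserving $L$-successor. If $w = w_1 \cdots w_\ell$ is $L$-length-preserving, let $j$ be the largest position such that, for some letter $b > w_j$, the state $q_0 \cdot w_1 \cdots w_{j-1} b$ admits a word of length $\ell - j$ leading into $F$; let $b$ be the smallest such letter; and let $s$ be the lexicographically smallest word of length $\ell - j$ with $q_0 \cdot w_1 \cdots w_{j-1} b s \in F$. Then the $L$-successor of $w$ is $w_1 \cdots w_{j-1} b s$, and if no such position $j$ exists, then $w$ is $L$-length-increasing. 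The only data this description uses are the DFA state reached along the input and, for a state $p$ and length $m$, the predicate $\mathsf{reach}(p,m)$: ``$p \cdot u \in F$ for some $u \in \Sigma^m$''.

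I would realize this as a product of a constant-size \emph{phase} component (three phases), an \emph{input tracker} (a copy of $\A$ run on the input, at most $n$ states), an \emph{output tracker} (a copy of $\A$ run on the emitted word, at most $n$ states), and a \emph{counter} component discussed below. In the first phase the transducer copies input letters verbatim (both trackers agree); at a nondeterministically chosen position $j$ it reads $w_j$, emits the smallest $b > w_j$ for which $\mathsf{reach}(q_0 \cdot w_1 \cdots w_{j-1} b,\, \ell - j)$ holds, and splits the trackers; in the third phase it reads each remaining input letter (its content being irrelevant except that the input tracker keeps running) and emits the smallest letter preserving reachability of $F$ in the remaining number of steps. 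An accepting run must end with the output tracker in $F$; this already forces the emitted suffix to be the minimal completion $s$ and forces the chosen break to be admissible. To force $j$ to be the \emph{largest} admissible break position -- which is exactly what distinguishes the $L$-successor from an arbitrary larger word of length $\ell$ -- the transducer uses the input tracker during the third phase: at each position it checks that no letter larger than the current input letter would have admitted a completion of the appropriate remaining length, and rejects otherwise. A short case analysis then shows that for an $L$-length-preserving input exactly the guess $j = j^\ast$ (the true largest admissible position) survives and all remaining choices along that run are forced, while for an $L$-length-increasing input every run rejects; this gives both correctness and unambiguity.

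The crux is that the transducer runs left to right and so does not know $\ell$ -- hence does not know the remaining length $m$ -- at the moment it must evaluate $\mathsf{reach}(p, m)$, both in the break step and in the maximality checks. For fixed $p$, this predicate is ultimately periodic in $m$, and the key observation is that one period $M$ serves all $p \in Q$ simultaneously: take $M$ to be the least common multiple of the periods (the gcds of the cycle lengths) of the strongly connected components of $\A$. Since the period of a component is at most its size and the components are vertex-disjoint, these periods are positive integers summing to at most $n$, so $M = 2^{\mathcal{O}(\sqrt{n \log n})}$ by the estimates on least common multiples already used in the proof of Lemma~\ref{lem:tl-simplified-dfa} (see also \cite{Chrobak86}). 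The counter component then tracks the remaining length exactly while it is below a threshold $T = 2^{\mathcal{O}(\sqrt{n\log n})}$ (so that the finitely many short cases, below the common preperiod of these predicates, are handled exactly) and only its residue modulo $M$ once it exceeds $T$; its initial value -- equivalently $\ell$, or $\ell \bmod M$ together with the flag $\ell > T$ -- is guessed at the start, and the single transition from the ``above $T$'' regime back to the ``exact'' regime is guessed as well, both guesses being verified when the input ends. Only the correct guesses yield an accepting run, so unambiguity is preserved. The two trackers contribute $n$ states each, the counter contributes $2^{\mathcal{O}(\sqrt{n\log n})}$, and the phase component is constant, so the product has $2^{\mathcal{O}(\sqrt{n\log n})}$ states.

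The step I expect to be the main obstacle is exactly this interaction between not knowing $\ell$ and unambiguity: one must verify carefully that the single modulus $M$ genuinely answers every $\mathsf{reach}$ query that arises, and that the nondeterministic guesses introduced to compensate for the unknown length never create a second accepting run nor alter the domain of the transducer. The remaining ingredients -- the product construction, the greedy characterization, and the $2^{\mathcal{O}(\sqrt{n\log n})}$ bound on $M$ -- are routine or follow from results established earlier in this section (cf.\ Theorem~\ref{thm:smallest-upper} and Corollaries~\ref{crl:upper-bigger} and~\ref{crl:upper-exclude}).
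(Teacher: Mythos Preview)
Your approach is correct in outline and genuinely different from the paper's. The paper does not build a length counter or evaluate a $\mathsf{reach}(p,m)$ predicate on the fly; instead it invokes the earlier results of this section as black boxes. After copying a prefix $u$ and nondeterministically breaking at state $q$ (reading $a$, outputting some $b>a$), the paper's transducer simulates in parallel the UFA for $B_{\ge}(L(\A_{q\cdot a}))$ on the remaining input (this is exactly your maximality-of-$j$ check, packaged via Corollary~\ref{crl:upper-bigger}), the DFA for $S(L(\A_{q\cdot b}))$ on the guessed output (your greedy minimal completion, packaged via Theorem~\ref{thm:smallest-upper}), and one DFA for $X(L(\A_{q\cdot c}))$ per intermediate letter $a<c<b$ (your minimality-of-$b$ check, packaged via Corollary~\ref{crl:upper-exclude}). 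Since each of these automata already has $2^{\mathcal{O}(\sqrt{n\log n})}$ states, no separate length-guessing mechanism is needed, and unambiguity follows directly from the thinness of $S(\cdot)$ and the unambiguity of the $B_{\ge}$ automaton. What your route buys is self-containment: you bypass the structural analysis of $S(L)$ in Lemmas~\ref{lem:swap}--\ref{lem:tl-simplify} and reduce everything to the single observation that the lcm of the SCC periods is bounded by Landau's function. What the paper's route buys is modularity and a cleaner unambiguity argument --- there is no counter to guess and verify, because the problem of ``knowing the remaining length without knowing $\ell$'' has already been absorbed into $\A^S_q$, $\A^B_q$, and $\A^X_q$. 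Your flagging of the counter/unambiguity interaction as the main obstacle is apt; the paper simply sidesteps it.
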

\begin{proof}
  Let $\A = (Q, \Sigma, {}\cdot{}, q_0, F)$ be a DFA for $L$ and let $n = \abs{Q}$.
  For every $q \in Q$, we denote by $\A_q$ the DFA that is obtained by making $q$ the new initial state of~$\A$.
  We use $\A^S_q$ to denote DFA with $2^{\mathcal{O}(\sqrt{n \log n})}$ states that recognizes the language $S(L(\A_q))$. These DFA exist by Theorem~\ref{thm:smallest-upper}.
  Moreover, by Corollary~\ref{crl:upper-bigger}, there exist UFA with $2^{\mathcal{O}(\sqrt{n \log n})}$ states that recognize the languages $B_{\ge}(L(\A_q))$. We denote these UFA by $\A^B_q$.
  Similarly, we use $\A^X_q$ to denote DFA with $2^{\mathcal{O}(\sqrt{n \log n})}$ states that recognize $X(L(\A_q))$. These DFA exist by Corollary~\ref{crl:upper-exclude}.
  
  In the finite-state transducer, we first simulate $\A$ on a prefix $u$ of the input, copying the input letters in each step, \ie producing the output $u$.
  At some position, after having read a prefix $u$ leading up to the state $q := q_0 \cdot u$, we nondeterministically decide to output a letter $b$ that is strictly larger than the current input letter $a$.
  From then on, we guess an output letter in each step and start simulating multiple automata in different components.
  In one component, we simulate $\A^B_{q \cdot a}$ on the remaining input.
  In another component, we simulate $\A^S_{q \cdot b}$ on the guessed output.
  In additional components, for each $c \in \Sigma$ with $a < c < b$, we simulate $\A^X_{q \cdot c}$ on the input.
  The automata in all components must accept in order for the transducer to accept the input.
  
  The automaton $\A^B_{q \cdot a}$ verifies that there is no word in $L$ that starts with the prefix $ua$, has the same length as the input word and is strictly larger than the input word.
  The automaton $\A^S_{q \cdot b}$ verifies that there is no word in $L$ that starts with the prefix $ub$, has the same length as the input word and is strictly smaller than the output word. It also certifies that the output word belongs to~$L$.
  For each letter $c$, the automaton $\A^X_{q \cdot c}$ verifies that there is no word in $L$ that starts with the prefix $uc$ and has the same length as the input word.

  Together, the components ensure that the guessed output is the unique successor of the input word, given that it is $L$-length-preserving. It is also clear that $L$-length-increasing words are rejected, since the $\A^S_{q \cdot b}$-component does not accept for any sequence of nondeterministic choices.
\end{proof}

The construction given in the previous proof can be extended to also compute $L$-successors of $L$-length-increasing words. However, this requires some quite technical adjustments to the transducer. Instead, we use a technique called \emph{padding}. A very similar approach appears in~\cite[Prop.~5.1]{AngrandS10}.

We call the smallest letter of an ordered alphabet $\Sigma$ the \emph{padding symbol} of~$\Sigma$.
A language $L \subseteq \Sigma^*$ is \emph{$\pad$-padded} if $\pad$
is the padding symbol of $\Sigma$ and $L = \pad^* K$ for some $K \subseteq (\Sigma
\setminus \os{\pad})^*$.
The key property of padded languages is that all words prefixed by a sufficiently long
block of padding symbols are $L$-length-preserving.

\begin{lemma}
  Let $\A$ be a DFA over $\Sigma$ with $n$ states
  such that $L(\A)$ is a $\pad$-padded language. Let $\Gamma = \Sigma \setminus
  \os{\pad}$ and let $K = L(\A) \intersect \Gamma^*$.
  Let $u \in \Gamma^*$ be a word that is not $K$-maximal.
  Then the $L(\A)$-successor of $\pad^n u$ has length $\abs{\pad^n u}$.
  \label{lem:same-length}
\end{lemma}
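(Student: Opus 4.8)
The plan is to reduce everything to producing a single witness word: it suffices to exhibit some $w \in L(\A)$ with $\abs{w} = n + \abs{u}$ and $w > \pad^n u$. Indeed, in radix order no word strictly larger than $\pad^n u$ can be shorter than $\pad^n u$, so such a $w$ shows both that $\pad^n u$ is not $L(\A)$-maximal and that its $L(\A)$-successor (which exists, being the least element of the nonempty set of words of $L(\A)$ exceeding $\pad^n u$) has length at most $\abs{w} = n + \abs{u}$; combined with the length lower bound $n + \abs{u}$ just mentioned, the successor then has length exactly $\abs{\pad^n u}$.

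For the setup I would first record that $L(\A) = \pad^* K$ and that restricting the transition function of $\A$ to the letters of $\Gamma$ yields a DFA with at most $n$ states for $K = L(\A) \intersect \Gamma^*$. Since $u$ is not $K$-maximal, let $u'$ be its $K$-successor, so $u' \in K$ and $u < u'$. If $\abs{u'} = \abs{u}$, then $u <_{\mathrm{lex}} u'$, hence $\pad^n u <_{\mathrm{lex}} \pad^n u'$, and $w := \pad^n u' \in \pad^* K = L(\A)$ is the required witness.

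The remaining case is $\abs{u'} > \abs{u}$, where I would pump $u'$ down inside $K$ until its length enters the window $(\abs{u}, n + \abs{u}]$. Starting from $u'$, as long as the current word of $K$ has length exceeding $n + \abs{u}$ (in particular exceeding $n$), a pigeonhole argument on its accepting run in the $\le n$-state DFA for $K$ removes a loop of length between $1$ and $n$, yielding a strictly shorter word still in $K$; this process terminates, and since each step shrinks the length by at most $n$, which is exactly the width of the target window, the shortest length reached that is at most $n + \abs{u}$ is still $> \abs{u}$. Call the resulting word $w_0 \in K$ and set $j := n + \abs{u} - \abs{w_0} \in \os{0, \dots, n-1}$ and $w := \pad^j w_0 \in L(\A)$. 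Then $\abs{w} = n + \abs{u}$; and since $\abs{w_0} > \abs{u} \ge 0$ forces $w_0 \ne \varepsilon$, the word $w_0$ begins with a letter of $\Gamma$, which occupies position $j+1 \le n$ of $w$, a position where $\pad^n u$ carries $\pad$ — so $w >_{\mathrm{lex}} \pad^n u$, as required. The one delicate point is precisely this last case: one must be sure that iterated pumping can be steered into the interval $(\abs{u}, n + \abs{u}]$ rather than leaping over it, and this is exactly why the length $n$ of the padding block is taken to match the number of states of $\A$.
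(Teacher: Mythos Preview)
Your proof is correct and follows essentially the same approach as the paper: exhibit a word of $L(\A)$ of length $n+\abs{u}$ that exceeds $\pad^n u$, obtained by padding a word of $K$ whose length lies in $(\abs{u},\,\abs{u}+n]$. The only cosmetic difference is that the paper uses the minimality of the $K$-successor $v$ to conclude in one line that $\abs{v}\le\abs{u}+n$ (otherwise a single pump-down of $v$ would yield a word of $K$ strictly between $u$ and $v$), so your iterated pumping in fact terminates after zero steps.
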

\begin{proof}
  Let $v$ be the $K$-successor of $u$. By a standard pumping argument, we have
  $\abs{u} \le \abs{v} \le \abs{u} + n$. This means that $\pad^{n + \abs{u} -
  \abs{v}} v$ is well-defined and belongs to $L(\A)$. Note that this word is
  strictly greater than $\pad^n u$ and has length $\abs{\pad^n u}$.
  Thus, the $L(\A)$-successor of $\pad^n u$ has length $\abs{\pad^n u}$, too.
\end{proof}

We now state the main result of this section.

\begin{theorem}
  Let $\A$ be a deterministic finite automaton over $\Sigma$ with $n$ states.
  Then there exists an unambiguous finite-state transducer with $2^{\mathcal{O}(\sqrt{n \log
  n})}$ states that maps every word to its $L(\A)$-successor.
\end{theorem}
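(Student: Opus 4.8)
The plan is to reduce the general problem to the $L$-length-preserving case already solved in Theorem~\ref{thm:upper-length-preserving}, by a padding argument in the spirit of~\cite[Prop.~5.1]{AngrandS10}. First I would pass to a padded automaton: let $\pad$ be a fresh symbol, declared smaller than every letter of $\Sigma$, put $\Sigma' = \Sigma \union \os{\pad}$, and modify $\A$ in the routine way — add a new $\pad$-self-looping initial state (accepting iff the old one was) and a sink state absorbing the $\pad$-transitions of the old states — to obtain a DFA $\A'$ with $n + \mathcal{O}(1)$ states recognizing the $\pad$-padded language $L' := \pad^* L(\A)$, which satisfies $L(\A') \intersect \Sigma^* = L(\A)$. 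Applying Theorem~\ref{thm:upper-length-preserving} to $\A'$ then produces an unambiguous transducer $T'$ with $2^{\mathcal{O}(\sqrt{n \log n})}$ states that maps each $L'$-length-preserving word to its $L'$-successor and rejects each $L'$-length-increasing word.

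The heart of the argument is the following claim; write $L = L(\A)$. For every $w \in \Sigma^*$ that is not $L$-maximal, the $L'$-successor of $\pad^n w$ equals $\pad^{n-m} v$, where $v$ is the $L$-successor of $w$ and $m = \abs{v} - \abs{w} \in \os{0, 1, \dots, n}$; and if $w$ is $L$-maximal, then $\pad^n w$ is $L'$-length-increasing. To prove this I would first invoke Lemma~\ref{lem:same-length} (with $\Gamma = \Sigma$ and $K = L$): if $w$ is not $L$-maximal, then $\pad^n w$ is $L'$-length-preserving, so its $L'$-successor is the radix-smallest word of $L'$ of length $n + \abs{w}$ exceeding $\pad^n w$. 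Then I would classify the words of $L'$ of that length by their number of leading $\pad$'s: those with exactly $n$ leading $\pad$'s are the $\pad^n k$ with $k \in L$ and $\abs{k} = \abs{w}$, and $\pad^n k > \pad^n w$ iff $k > w$; those with $j < n$ leading $\pad$'s are the $\pad^j k$ with $k \in L$ and $\abs{k} = \abs{w} + (n - j) > \abs{w}$, and all of these exceed $\pad^n w$; those with more than $n$ leading $\pad$'s are all smaller than $\pad^n w$. Since a longer block of leading $\pad$'s makes a word radix-smaller, the minimum among the admissible candidates is $\pad^n v$ if $w$ has a length-preserving $L$-successor $v$, and is $\pad^{n-m} v$ otherwise, where $v$ is the (then length-increasing) $L$-successor of $w$ and $m = \abs{v} - \abs{w}$; here $m \le n$ by the standard pumping bound already used in Lemma~\ref{lem:same-length}, so $\pad^{n-m} v$ is well-defined and has length $n + \abs{w}$. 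Finally, if $w$ is $L$-maximal, then $L$ contains no word of length at least $\abs{w}$ exceeding $w$, so no word of $L'$ of length $n + \abs{w}$ exceeds $\pad^n w$ and $\pad^n w$ is $L'$-length-increasing. In every case, the $L$-successor of $w$ (when it exists) is recovered from the $L'$-successor of $\pad^n w$ by deleting its leading run of $\pad$'s, which is exactly its first $n - m$ symbols since $v \in \Sigma^*$.

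With the claim established I would assemble $T$ from $T'$ by simulation: on input $w \in \Sigma^*$, $T$ first feeds $\pad^n$ to $T'$ using $\varepsilon$-input transitions, then feeds the letters of $w$; on the output side, $T$ suppresses the leading run of $\pad$'s produced by $T'$ and thereafter copies every output symbol of $T'$, which is causal by the claim since $\pad^{n-m} v$ has all its $\pad$'s at the front. A state of $T$ consists of a state of $T'$, a counter in $\os{0, \dots, n}$ for the number of $\pad$'s still to be fed, and one bit recording whether a non-$\pad$ symbol has already been output, giving $2^{\mathcal{O}(\sqrt{n \log n})} \cdot (n+1) \cdot 2 = 2^{\mathcal{O}(\sqrt{n \log n})}$ states. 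The forced $\pad^n$ prefix makes the runs of $T$ on $w$ correspond bijectively to the runs of $T'$ on $\pad^n w$, so $T$ inherits unambiguity from $T'$; the unique accepting run outputs the $L$-successor $v$ precisely when $w$ is not $L$-maximal, and there is no accepting run when $w$ is $L$-maximal, as required.

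The step I expect to be the main obstacle is the claim, specifically ruling out that the $L'$-successor of $\pad^n w$ is a padded copy of a word of $L$ shorter than the true $L$-successor — in the extreme, of the $L$-minimal word — which would then strip to the wrong output. This is exactly where the choice of padding length $n$ enters: the pumping bound $m \le n$ guarantees that the padded copy $\pad^{n-m} v$ of the genuine $L$-successor still fits within length $n + \abs{w}$, so it remains a candidate and is in fact the minimal admissible one.
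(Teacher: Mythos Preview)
Your proposal is correct and follows essentially the same approach as the paper: pass to the $\pad$-padded language $\pad^* L$, invoke Theorem~\ref{thm:upper-length-preserving} on the padded automaton, and then strip the padding by converting the $\pad$-reading and $\pad$-writing transitions to $\varepsilon$-transitions. The paper implements the last step more tersely---it restricts $T'$ to inputs in $\pad^{n+1}\Sigma^*$ and then globally replaces every $\pad$ on an input or output label by $\varepsilon$---whereas you simulate the same effect with an explicit counter and a ``first non-$\pad$ seen'' bit; your explicit verification that the $L'$-successor of $\pad^n w$ is exactly $\pad^{n-m}v$ is a detail the paper leaves implicit.
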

\begin{proof}
  We extend the alphabet by adding a new padding symbol $\diamond$ and convert $\A$ to a DFA for $\diamond^* L$ by adding a new
  initial state.
  The language $L'$ accepted by this new DFA is $\pad$-padded. By
  Theorem~\ref{thm:upper-length-preserving} and Lemma~\ref{lem:same-length}, there exists an unambiguous transducer of the
  desired size that maps every word from $\pad^{n+1} \Sigma^*$ to its successor in $L'$.
  It is easy to modify this transducer such that all words that do not belong to $\pad^{n+1} \Sigma^*$ are rejected.
  We then replace every transition that reads a $\pad$ by a corresponding
  transition that reads the empty word instead.
  Similarly, we replace every transition that outputs a $\pad$ by a transition
  that outputs the empty word instead.
  Clearly, this yields the desired construction for the original language $L$.
  A careful analysis of the construction shows that the transducer remains unambiguous after each step.
\end{proof}

We now show that this construction is optimal up to constants in the exponent.
The idea is similar to the construction used in Theorem~\ref{thm:smallest-lower}.
\begin{theorem}
  There exists a family of deterministic finite automata $(\A_n)_{n \in \N}$
  such that $\A_n$ has $n$ states whereas the smallest unambiguous transducer
  that maps every word to its $L(\A_n)$-successor has $2^{\Omega(\sqrt{n \log
  n})}$ states.
\end{theorem}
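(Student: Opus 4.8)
The plan is to mimic the lower-bound construction of Theorem~\ref{thm:smallest-lower}, adapting it so that the hard instances force the \emph{transducer} — rather than an NFA for $S(L)$ — to have $2^{\Omega(\sqrt{n\log n})}$ states. The natural family is obtained by taking essentially the same language $L$ built from the first $k$ primes $p_1,\dots,p_k$, with $p = p_1\cdots p_k$, and recalling that $L$ has a DFA on $O(k^2 + p_1 + \dots + p_k)$ states, so by the prime-number estimates $n = 2^{O(\sqrt{p\log\log p}\,/\,\cdots)}$, giving $p = 2^{\Omega(\sqrt{n\log n})}$ exactly as before. What I would do first is pin down, for this $L$, a concrete word $w$ whose $L$-successor forces a long computation: the word $1^{k+1+p}$ (or a slight variant, e.g.\ padded to guarantee it is $L$-length-preserving or at least that its successor is far away) lies in $S(L)$, and the previous proof shows that the next element of $L$ of the same length does not exist, so its $L$-successor is $1^{k+1+p}$ itself when approached from below — more usefully, the $L$-successor of $1^{k+1+p-1}$ (which is not in $L$) is $1^{k+1+p}$, and the input $1^{k+1+p-1}$ has no strictly larger word of the same length in $L$ other than through a full block, so the transducer must "count" $1$'s modulo $p$.

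The key step is a fooling-type argument on the transducer. Let $T$ be an unambiguous transducer computing $L(\A_n)$-successors with $s$ states. Consider feeding $T$ inputs of the form $1^m$ for $m$ in a suitable range around $k+1+p$; for each such $m$ the output is the $L$-successor of $1^m$, and these successors differ in a way controlled by the residues of $m$ modulo the various $p_i$ — specifically, whether the successor stays at length $m$ (going into some branch $1^i 0^{k-i+1} 1^{\ast}$) or jumps to length $m+1$. I would argue that two inputs $1^{m}$ and $1^{m'}$ with $m \equiv\!\!\!/\, m' \pmod p$ cannot both be accepted using the same state at, say, position $k+2$ onwards: if the accepting runs coincided from that point on, then splicing would produce an accepting run on $1^{m''}$ for some $m''$ with a wrong residue, forcing $T$ to output the $L$-successor of $1^{m''}$ while the run "thinks" it is handling $1^{m}$, contradicting correctness (or unambiguity). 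This yields $\ge p$ distinct "state signatures" and hence $s \ge p = 2^{\Omega(\sqrt{n\log n})}$.

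The main obstacle I anticipate is that a transducer, unlike an NFA for $S(L)$, gets to \emph{see the whole input}, so a crude pumping/loop argument on a single accepting path does not immediately give a contradiction — the loop length $\ell < p$ that one pumps changes the input length, and one must ensure the transducer's output is then provably wrong, which requires carefully choosing where in the long $1$-block the loop sits and tracking both the input length and the produced output simultaneously. I would handle this by working with the unambiguity hypothesis directly: for a fixed long input $1^N$ with $N$ chosen so that $1^N$ is $L$-length-preserving (this is where a small amount of padding, in the spirit of Lemma~\ref{lem:same-length}, or a direct choice of $N$, is convenient), the unique accepting run visits some state twice within the first $p$ steps after position $k+1$; pumping that loop changes $N$ to $N+\ell$ with $0<\ell<p$, and the pumped run is \emph{also} accepting on input $1^{N+\ell}$, hence by uniqueness it is \emph{the} run — but one checks its output no longer equals the $L$-successor of $1^{N+\ell}$ because some $p_i$ that divided neither $\ell$ nor $N+\ell$ now makes $1^i 0^{k-i+1} 1^{\,\cdot\,}$ a valid shorter target, contradicting correctness. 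Writing this out requires care about the interaction of $\varepsilon$-transitions, the definition of accepting runs for transducers, and the exact arithmetic of residues, but no genuinely new idea beyond Theorem~\ref{thm:smallest-lower}.
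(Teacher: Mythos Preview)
Your overall strategy --- build $L$ from the first $k$ primes, feed the transducer a long ``unary'' input, locate a short cycle in the unique accepting run, pump it, and derive a residue contradiction --- is exactly what the paper does. The paper, however, switches to a $(k{+}1)$-letter alphabet and the language $\{\, i\,\#^{j} : p_i \mid j \,\}$, which makes the arithmetic of successors more transparent; your plan to reuse the binary language of Theorem~\ref{thm:smallest-lower} can also be made to work, but your instantiation contains two concrete errors that you would hit as soon as you tried to write the details.

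First, since $0<1$, the word $1^N$ is the \emph{maximum} word of its length, so it is \emph{never} $L$-length-preserving; your stated choice ``$N$ so that $1^N$ is $L$-length-preserving'' is impossible, and padding in the sense of Lemma~\ref{lem:same-length} does not help here. The correct move is to take $N\equiv k \pmod{p}$: then no word $1^i 0^{k-i+1}1^{N-k}$ lies in $L$, so the $L$-successor of $1^N$ is exactly $1^{N+1}$, and both input and output are pure blocks of $1$'s. Second, you never argue that the cycle reads and writes the same number of letters. A transducer cycle may consume $r$ input symbols and emit $r'\neq r$ output symbols, so the pumped output need not even have the correct length; the paper isolates this as a separate step. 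In your setting, once $N\equiv k\pmod p$, this step is easy: the $L$-successor of $1^{N+r}$ always has length $N+r+1$, while the pumped run outputs $1^{\,N+1+r'}$, forcing $r'=r$ (and $r>0$ by unambiguity, since an $\varepsilon$-reading cycle would yield a second accepting run on the same input). The contradiction is then that the pumped run outputs $1^{N+r+1}$, yet since $0<r<p$ some $p_i\nmid r$, so $1^i 0^{k-i+1}1^{N+r-k}\in L$ is a strictly \emph{smaller} word of the \emph{same} length $N+r+1$ --- not a ``shorter target'' as you wrote.
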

\begin{proof}
  Let $k \in \N$.
  Let $p_1, \dots, p_k$ be the $k$ smallest prime numbers such that $p_1 <
  \cdots < p_k$ and let $p = p_1 \cdots p_k$.
  We construct a deterministic finite automaton $\A$ with $2 + p_1 + \dots
  + p_k$ states such that the smallest transducer computing the desired mapping
  has at least $p$ states. From known estimates on the prime numbers (\eg \cite[Sec.~2.7]{Bach&Shallit:1996}), this suffices to prove our claim.

  The automaton is defined over the alphabet $\Sigma = \os{1, \dots, k} \union
  \os{\#}$.
  It consists of an initial state $q_0$, an error state $q_\err$, and
  states $(i, j)$ for $i \in \os{1, \dots, k}$ and $j \in \os{0, \dots,
  p_i-1}$ with transitions defined as follows:
  \begin{align*}
      q_0 \cdot a & = 
        \begin{cases}
          (a, 0), & \text{for $a \in \os{1, \dots, k}$}; \\
          q_\err, & \text{if $a = \#$}; \\
        \end{cases} \\
      (i, j) \cdot a & = 
        \begin{cases}
          (i, j+1 \bmod p_i), & \text{if $a = \#$}; \\
          q_\err, & \text{for $a \in \os{1, \dots, k}$}.
        \end{cases}
  \end{align*}
  The set of accepting states is $\set{(i, 0)}{1 \le i \le k}$.
  The language $L(\A)$ is the set of all words of the form $i \#^{j}$ with $1
  \le i \le k$ such that $j$ is a multiple of $p_i$.

  Assume, to get a contradiction, that there exists an unambiguous
  transducer with less than $p$ states that maps $w$ to the smallest word in
  $L(\A)$ strictly greater than $w$.
  Consider an accepting run of this transducer on some input of the form $2
  \#^{\ell p}$ with $\ell \in \N$ large enough such that the run contains a
  cycle.
  Clearly, since $\ell p + 1$ and $p$ are coprime, the output of the transducer
  has to be $2\#^{\ell p + 2}$.
  We fix one cycle in this run.

  If the number of $\#$ read in this cycle does not equal the number of $\#$ output in
  this cycle, by using a pumping argument, we can construct a word of the form
  $2 \#^j$ that is mapped to a word or the form $i \#^{j'}$ with $\abs{j'-j} >
  2$. This contradicts the fact that $2 \#^{2\N}$ is a subset of $L(\A)$.
  Therefore, we may assume that both the number of letters read and output on
  the cycle is $r \in \os{1, \dots, p-1}$.

  Again, by a pumping argument, this implies that $2\#^{\ell p + jr}$ is mapped
  to $2\#^{\ell p + jr + 2}$ for every $j \in \N$.
  Since $r < p$, at least one of the prime numbers $p_i$ is coprime to $r$.
  Therefore, we can choose $j$ such that $jr + 1 \equiv 0\modd {p_i}$.
  However, this means that $p_i \#^{\ell p + jr + 1}$ belongs to $L(\A)$,
  contradicting the fact that the transducer maps $2\#^{\ell p + jr}$ to
  $2\#^{\ell p + jr + 2}$.
\end{proof}

Combining the two previous theorems, we obtain the following corollary.

\begin{corollary}
  Let $L$ be a language that is recognized by a DFA with $n$ states.
  Then, in general, $2^{\Theta(\sqrt{n \log n})}$ states are necessary and sufficient for an unambiguous finite-state transducer that maps words to their $L$-successors.
\end{corollary}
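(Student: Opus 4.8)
The plan is to obtain this corollary by directly combining the two immediately preceding theorems, with no new construction required. The statement has two halves---a \emph{sufficiency} half and a \emph{necessity} half---and each is precisely one of those theorems, expressed in the same parameter $n$ (the number of states of a DFA for $L$).

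For the upper bound (sufficiency), I would invoke the theorem asserting that for every DFA $\A$ with $n$ states there is an unambiguous finite-state transducer with $2^{\bigO(\sqrt{n \log n})}$ states that maps every word to its $L(\A)$-successor. Since any language recognized by an $n$-state DFA is of the form $L(\A)$ for such an $\A$, this immediately yields the claimed $2^{\bigO(\sqrt{n \log n})}$ upper bound on the size of a transducer computing $L$-successors. For the lower bound (necessity), I would invoke the theorem producing a family $(\A_n)_n$ of DFAs where $\A_n$ has $n$ states and the smallest unambiguous transducer computing $L(\A_n)$-successors has $2^{\Omega(\sqrt{n \log n})}$ states. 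This witnesses that no bound asymptotically smaller than $2^{\Theta(\sqrt{n \log n})}$ is achievable in the worst case. Putting the two matching bounds together gives the $2^{\Theta(\sqrt{n \log n})}$ characterization, matching up to the constant hidden in the exponent.

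The only point worth a word of care is the standard convention behind the phrase ``in general'': the lower-bound family need only exist for infinitely many values of $n$ (for instance those arising as $2 + p_1 + \cdots + p_k$ in the lower-bound construction), which is exactly what a worst-case state-complexity statement requires, and the upper bound holds for \emph{all} $n$. I do not expect any substantive obstacle here---this is a bookkeeping corollary---so the ``main obstacle'' amounts only to confirming that the $\bigO$ of the upper-bound theorem and the $\Omega$ of the lower-bound theorem are both stated in terms of the DFA state count $n$, which they are, so that the two bounds genuinely meet.
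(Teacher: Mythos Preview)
Your proposal is correct and matches the paper's approach exactly: the corollary is stated immediately after the upper-bound and lower-bound theorems with the remark ``Combining the two previous theorems, we obtain the following corollary,'' and no further argument is given. Your additional remark about the ``in general'' convention is accurate and harmless.
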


\end{document}